\pgfplotsset{compat=1.10}
\newtheorem{corollary}{Corollary}
\newtheorem{lemma}{Lemma}
\theoremstyle{definition}
\begin{document}
\title{Optimization of Rate-Splitting Multiple Access in Beyond Diagonal RIS-assisted 
 URLLC Systems}
\author{Mohammad Soleymani, \emph{Member, IEEE},  
Ignacio Santamaria, \emph{Senior Member, IEEE}, \\
Eduard Jorswieck, \emph{Fellow, IEEE}, and Bruno Clerckx, \emph{Fellow, IEEE}
 \\ \thanks{ 
Mohammad Soleymani is with the Signal and System Theory Group, Universit\"at Paderborn, 33098 Paderborn, Germany (email: \protect\url{mohammad.soleymani@uni-paderborn.de}).  

Ignacio Santamaria is with the Department of Communications Engineering, University of Cantabria, 39005 Santander, Spain (email: \protect\url{i.santamaria@unican.es}).

Eduard Jorswieck is with the Institute for Communications Technology, Technische Universit\"at Braunschweig, 38106 Braunschweig, Germany
(e-mail: \protect\url{jorswieck@ifn.ing.tu-bs.de})

Bruno Clerckx is with the Department of Electrical and Electronic Engineering,
Imperial College London, London SW7 2AZ, U.K and with Silicon Austria Labs (SAL), Graz A-8010, Austria (e-mail: \protect\url{b.clerckx@imperial.ac.uk}; 
\protect\url{bruno.clerckx@silicon-austria.com}).

The work of Ignacio Santamaria was funded by MCIN/ AEI /10.13039/501100011033 under Grant PID2022-137099NB-C43 (MADDIE).  
The work of Eduard A. Jorswieck was partly supported by the Federal Ministry of Education and Research (BMBF, Germany) through the Program of ``Souver\"an. Digital. Vernetzt.'' Joint Project 6G-Research and Innovation Cluster (RIC) under Grant 16KISK031.
}}
\maketitle
\begin{abstract}
This paper proposes a general optimization framework for rate splitting multiple access (RSMA) in beyond diagonal (BD) reconfigurable intelligent surface (RIS) assisted ultra-reliable low-latency communications (URLLC) systems. {
This framework can provide a suboptimal  solution for a large family of optimization problems in which the objective and/or constraints are linear functions of the rates and/or energy efficiency (EE) of users.} Using this framework, we show that RSMA and RIS can be mutually beneficial tools when the system is overloaded, i.e., when the number of users per cell is higher than the number of base station (BS) antennas. Additionally, we show that the benefits of RSMA increase when the packets are shorter and/or the reliability constraint is more stringent.  Furthermore, we show that the RSMA benefits increase with the number of users per cell and decrease with the number of BS antennas. Finally, we show that RIS (either diagonal or BD) can highly improve the system performance, and BD-RIS outperforms regular RIS. 
\end{abstract}
\begin{IEEEkeywords}
 Beyond diagonal reconfigurable intelligent surface, energy efficiency, MISO broadcast channels, rate splitting multiple access, spectral efficiency, ultra-reliable low-latency communications.
\end{IEEEkeywords}

\section{Introduction}
The sixth generation (6G) of communication systems should be around 100 times more reliable than 5G networks and at the same time, provide around 10 times lower latency comparing to 5G networks \cite{wang2023road, gong2022holographic}. Moreover, it is expected that 6G networks become around 100 times more energy efficient and around 10 times more spectral efficient  than 5G systems \cite{wang2023road, gong2022holographic}. To fulfill such ambitious goals, 6G should employ promising technologies such as reconfigurable intelligent surface (RIS) and rate splitting multiple access (RSMA)  
\cite{wu2021intelligent, di2020smart, mao2022rate, clerckx2023primer}. 
In this paper, we propose a general optimization framework to improve spectral efficiency (SE) and energy efficiency (EE) of ultra-reliable and low-latency communication (URLLC) systems by employing RIS and RSMA. 
Moreover, we investigate whether and how RIS and RSMA can be beneficial in URLLC systems, and show how their possible benefits can vary in different operational points depending on the latency and reliability constraints.

\subsection{Related works}\label{sec-i-a}
To support low latency, we cannot operate in very large packet length regimes and have to employ shorter packet lengths in which the Shannon rates are not accurate anymore. 
In \cite{polyanskiy2010channel}, it was shown that the rate for single-input single-output (SISO) point-to-point channels transmitting Gaussian signals can be approximated as
\begin{equation}\label{na}
{r}=
C
-
{Q^{-1}(\epsilon)\sqrt{\frac{V}{n_t}}},
\end{equation}
where $C$ is the Shannon rate, $n_{t}$ is the packet length in bits, $Q^{-1}$ is the inverse of the Gaussian Q-function, $\epsilon$ is the decoding error probability of the message, and $V$ is the channel dispersion. 
The finite-block-length (FBL) rate approximation in \eqref{na} is known as the normal approximation (NA). The accuracy of the NA at different operational points has been vastly discussed in  \cite{polyanskiy2010channel, erseghe2016coding, erseghe2015evaluation}. 
As can be easily verified from \eqref{na}, the FBL rates are smaller than Shannon rates. In addition, as expected, the shorter the packet length is and/or the more stringent the reliability constraint is, the lower rate can be achieved. 
Indeed, we have to transmit at a lower rate to ensure a more reliable communication with low latency. 
Resource allocation and transmission schemes for FBL regimes based on the NA have been studied in \cite{ nasir2020resource,  ghanem2020resource, nasir2021cell}. 
In \cite{ nasir2020resource}, the authors proposed power optimization and beamforming schemes for  a broadcast channel (BC). 
In \cite{ghanem2020resource}, the authors proposed schemes to maximize the weighted sum rate of a multiple-input single-output (MISO) orthogonal frequency division multiple access (OFDMA) URLLC system. Moreover, the paper \cite{nasir2021cell} proposed schemes to maximize the minimum rate and minimum EE of users in a cell-free massive multiple-input multiple-output (MIMO) URLLC system.

One of the targets of 6G is to significantly improve the SE and EE, which can be even more important in URLLC systems. A promising technology that will enable us to meet this target is RIS, which has been shown to enhance the performance of various interference-free and interference-limited systems with Shannon rates and/or the NA in \eqref{na} \cite{wu2021intelligent, di2020smart, wu2019intelligent,  huang2019reconfigurable,     soleymani2022noma, pan2020multicell, soleymani2022improper,  soleymani2022rate, soleymani2023noma, soleymani2023energy, santamaria2023icassp, li2021aerial,  xie2021user, abughalwa2022finite, soleymani2023spectral, vu2022intelligent, ren2021intelligent,   zhang2021irs,   almekhlafi2021joint,  
ranjha2020urllc, ranjha2021urllc, 
 dhok2021non, ghanem2021joint}. 
{
In \cite{wu2019intelligent,  huang2019reconfigurable},  it was shown that RIS can improve the SE and/or EE of MISO BCs. In \cite{soleymani2022noma}, a non-orthogonal multiple access (NOMA) based scheme was proposed for a multi-cell MISO RIS-assisted BC, and it was shown that RIS can significantly increase the minimum rate and EE of the network. 
The superiority of RIS in multi-cell MIMO BCs was studied in \cite{pan2020multicell, soleymani2022improper} when treating interference as noise (TIN) is employed. In \cite{soleymani2022rate, soleymani2023noma, soleymani2023energy}, it was shown that RIS can highly enhance the SE and EE of a multi-cell MIMO BC when the transceivers suffer from I/Q imbalance. In \cite{santamaria2023icassp}, it was shown that RIS can decrease the interference leakage of $K$-user MIMO interference channels.

All the aforementioned papers \cite{wu2019intelligent,  huang2019reconfigurable,     soleymani2022noma, pan2020multicell, soleymani2022improper,  soleymani2022rate, soleymani2023noma, soleymani2023energy, santamaria2023icassp} on RIS considered Shannon rates.   
The use of RIS in FBL regimes has been investigated in
\cite{ 
li2021aerial,  xie2021user, abughalwa2022finite, soleymani2023spectral, vu2022intelligent, ren2021intelligent,   zhang2021irs,   almekhlafi2021joint, ranjha2020urllc, ranjha2021urllc,  dhok2021non, ghanem2021joint}. 
 The authors in \cite{li2021aerial} showed that RIS can improve the performance of URLLC unmanned aerial vehicle (UAV).
In \cite{xie2021user, abughalwa2022finite, soleymani2023spectral}, it was shown that RIS can improve the performance of SISO/MISO BCs with TIN in URLLC systems. 
The authors in \cite{vu2022intelligent} showed that RIS and NOMA can decrease the average decoding rate and at the same time increase the throughput of a two-user SISO BC. 
In \cite{ren2021intelligent}, it was shown that RIS can enhance the performance of a MISO point-to-point URLLC system with a factory automation scenario. Finally, \cite{ghanem2021joint} showed that RIS can increase the sum rate of a multi-cell MISO orthogonal frequency division multiplexing (OFDM) BC. 
For a more detailed review on RIS, we refer the reader to \cite{wu2021intelligent, di2020smart}.}

There are different RIS technologies and architectures. In most of the early works, the matrix modeling the RIS coefficients is assumed to be  diagonal. However, the system performance can be improved by relaxing the diagonality assumption and using beyond diagonal (BD) RIS architectures  \cite{li2023reconfigurable}. In a BD-RIS, each RIS element can be connected to other elements through a circuit \cite{shen2021modeling, li2022beyond}. There are three different possibilities for BD-RIS based on the connectivity of RIS elements: single-connected, group-connected and fully-connected architectures \cite{li2022beyond}. 
Indeed, a regular passive RIS can be considered as a special case of BD-RIS, which can be referred to as single-connected architecture.  In a fully-connected BD-RIS, all the BD-RIS elements are connected to each other, while in a group-connected BD-RIS, each element is connected to only a group of elements, which reduces the implementation complexities.  
The superiority of BD-RIS over a regular RIS has been studied in \cite{li2022reconfigurable, wang2023dual, li2022beyond, santamaria2023snr, ruiz2023low}. 
For instance, in \cite{li2022reconfigurable}, the authors proposed a scheme that results in a non-diagonal phase shift RIS matrix, showing that BD-RIS can improve the performance of single- and multi-user MISO BCs. 
Moreover, it was shown in \cite{wang2023dual} that BD-RIS can outperform RIS in
dual-function radar-communication
 systems. Additionally, \cite{li2022beyond} showed that BD-RIS (with the group- and fully-connected architectures) can enhance sum rate of a MISO BC.
{
It is shown in \cite{santamaria2023snr} that BD-RIS can increase the signal-to-noise ratio (SNR) of SISO and MISO systems.}

Another promising technology to enhance SE and EE is RSMA, which includes  as particular cases many other technologies such as TIN, NOMA, multicasting, broadcasting, and spatial division multiple access (SDMA) \cite{mao2022rate, clerckx2023primer}. In rate splitting (RS) schemes, there are two types of messages: common and private. Each private message is intended  for only a specific user, while common messages are decoded by all or by a group of users depending the employed RSMA scheme \cite{mao2022rate}. Indeed, there are different RSMA schemes based on the number/format of common messages. The simplest RSMA scheme is the 1-layer RS, which is very practical and efficient \cite{hao2015rate, lu2017mmse, clerckx2019rate, mao2020beyond, flores2020linear, soleymani2023rate}.
 In 1-layer RS, there is only one common message, which is decoded by all the users, while treating the private messages as noise. Moreover, each user decodes its own private message after decoding and canceling the common message from the received signal.
 Note that when interference is very weak, the optimal strategy for sum rate maximization or in terms of the generalized degrees of freedom is to treat the interference as noise \cite{annapureddy2009gaussian, geng2015optimality}. Furthermore, if interference is strong, the interfering signal should be decoded and canceled from the received signal, which is widely known as successive interference cancellation (SIC) \cite{sato1981capacity}. RSMA bridges TIN and SIC, which makes RSMA very flexible and powerful. Based on the interference level at users, RSMA can switch between TIN and SIC, without the need to order users, thus reducing the design complexities. 
For a more detailed overview on RSMA, we refer the reader to \cite{mao2022rate, clerckx2023primer}. 

\begin{table*}
\centering
\footnotesize
\caption{A brief comparison of the most related works.}\label{table-1}
\begin{tabular}{|c|c|c|c|c|c|c|c|c|c|c|c|c|c|c|}
	\hline
&RIS
&FBL&
MU communications&Multiple-antenna Systems&
EE metrics&RSMA&SE metrics&BD-RIS
 \\
\hline
  This paper&$\surd$
&$\surd$&$\surd$&$\surd$
&$\surd$&$\surd$&$\surd$&$\surd$
\\
\hline
\cite{ nasir2020resource,  ghanem2020resource}&
&$\surd$&$\surd$&$\surd$
&&&$\surd$&
\\
\hline
\cite{nasir2021cell}&
&$\surd$&$\surd$&$\surd$
&$\surd$&&$\surd$&
\\
\hline
\cite{    pan2020multicell} 
&$\surd$
&&$\surd$&$\surd$
&&&$\surd$&
\\
\hline
\cite{huang2019reconfigurable, soleymani2022improper, soleymani2022noma}&$\surd$&
&$\surd$&$\surd$
&$\surd$&&$\surd$&
\\
\hline
\cite{soleymani2022rate}&$\surd$
&&$\surd$&$\surd$
&$\surd$&$\surd$&$\surd$&
\\
\hline
\cite{ li2021aerial, ghanem2021joint }&$\surd$
&$\surd$&$\surd$&$\surd$
&&&$\surd$&
\\
\hline
\cite{ vu2022intelligent, xie2021user,     almekhlafi2021joint}&$\surd$
&$\surd$&$\surd$
&&&&$\surd$&
\\
\hline
\cite{ ren2021intelligent,   zhang2021irs}&$\surd$
&$\surd$&&$\surd$
&&&$\surd$&
\\
\hline
\cite{abughalwa2022finite}&$\surd$
&$\surd$&$\surd$&
&&&$\surd$&
\\
\hline
\cite{li2022robust, wang2023flexible, xu2022max, xu2022rate, ou2022resource, vu2022short}&
&$\surd$&$\surd$&$\surd$
&&$\surd$&$\surd$&
\\
\hline
\cite{singh2023rsma}&$\surd$
&$\surd$&$\surd$&&
&$\surd$&$\surd$&
\\
\hline
\cite{liu2022network, karacora2022rate, singh2022performance}&
&$\surd$&$\surd$&&
&$\surd$&$\surd$&
\\
\hline
\cite{katwe2022rate}&$\surd$
&$\surd$&$\surd$&$\surd$
&$\surd$&$\surd$&
&
\\
\hline
\cite{dhok2022rate}&$\surd$&$\surd$&$\surd$&&&$\surd$&$\surd$&
\\
\hline
		\end{tabular}
\normalsize
\end{table*}

The performance of RSMA with FBL has been studied in \cite{  wang2023flexible, xu2022max, xu2022rate, ou2022resource,  li2022robust, vu2022short,  liu2022network, karacora2022rate, singh2022performance} for systems without RIS. 
In \cite{wang2023flexible}, the authors proposed a flexible RSMA scheme for a MISO BC with FBL and showed that their proposed scheme outperforms SDMA and NOMA. In \cite{xu2022max, xu2022rate}, it was shown that RSMA can achieve the same minimum or sum rate as NOMA and SDMA with a smaller packet length, meaning that RSMA may reduce latency for a  given target rate. In \cite{ou2022resource}, the authors proposed resource allocation schemes for RSMA in a MISO URLLC BC, showing that RSMA provides a higher effective throughput than NOMA. Additionally, they showed that RSMA can reduce latency and enhance reliability. 
{
In \cite{li2022robust}, a robust beamforming design was proposed for a MISO BC with 1-layer RS, and it was shown that RSMA can significantly increase the minimum rate of the users in the considered scenario. 
In \cite{singh2022performance}, it was shown that a 1-layer RS scheme can enhance the performance of a SISO UAV-assisted BC with imperfect channel state information (CSI) and imperfect SIC by considering both infinite and finite block length regimes.}

Finally, we summarize some of the most related works in Table \ref{table-1}. As indicated, RIS and RSMA are powerful tools to improve SE and EE of various systems. 
However, there are only a limited number of papers that considered the performance of RSMA in RIS-assisted URLLC systems (i.e., \cite{ singh2023rsma, dhok2022rate, katwe2022rate}). 
{
In \cite{singh2023rsma}, 1-layer RS schemes were proposed for a SISO system aided by RIS and UAV in which RIS is mounted at a UAV relay that operates in a full-duplex mode with the decode and forward strategy. Then it was shown that RSMA can outperform NOMA and OMA schemes in both infinite and finite block length regimes.   The authors in \cite{dhok2022rate} proposed 1-layer RS schemes for both infinite and finite block length regimes in a SISO STAR-RIS-assisted BC with spatially-correlated channels. In \cite{katwe2022rate}, it was shown that RSMA can increase the global EE of a MISO RIS-assisted BC with FBL regime.}

To sum up, to the best of our knowledge, the paper \cite{katwe2022rate} is the only paper on RSMA in multiple-antenna RIS-assisted URLLC systems. 
Moreover, there is only one paper considering EE metrics in URLLC systems with RSMA, i.e., \cite{katwe2022rate}, in which, it was shown that RSMA can increase the global EE of a single-cell MISO RIS-assisted BC. 
Thus, the performance of RSMA in multi-antenna RIS-assisted systems should be further studied. Additionally,  more energy-efficient RSMA schemes should be developed for URLLC systems.

\subsection{Motivation}
RSMA is a very effective and flexible tool to manage interference that encompasses a large variety of multiple-access technologies such as NOMA, TIN, SDMA, broadcasting and multi-casting \cite{mao2022rate,clerckx2019rate}. Moreover, RIS enables optimizing environment by modulating channels, which can be employed to neutralize interference and/or to improve coverage. Hence, one might expect that the RSMA benefits are reduced when RIS is employed since RIS can manage interference in some scenarios especially in multiple-antenna systems. 
{
However, in \cite{soleymani2022rate, li2023sum, yang2020energy, shambharkar2022performance, li2023synergizing, li2022rate}, it was shown that RSMA and RIS can be mutually beneficial tools in overloaded 
systems with Shannon rate. Unfortunately, in FBL regimes, the Shannon rates are not accurate, which makes optimizing parameters more complicated and can bring some new challenges/tradeoffs. 
Moreover, the solutions for Shannon rates cannot be used in FBL regime, which further motivates developing  specific RSMA techniques for RIS-assisted URLLC systems. }

{
In this work, we study the role of RIS and RSMA in multiple-antenna URLLC systems, and particularly, provide an answer to the following question: {\em What is the impact of RIS on the performance of RSMA in multiple-antenna URLLC systems?} We show that the impact of RIS on RSMA is a dichotomy depending on  the network load. More specifically, RIS can enhance the benefits of RSMA in overloaded systems, but the RSMA benefits decrease (or even become negligible) by optimizing RIS components in underloaded systems.  Moreover, in this work, we investigate the impact of packet length, which is related to the latency constraint, and the reliability constraint on the RSMA performance. We show that the benefits of RSMA increase when packets are shorter or when the maximum tolerable decoding probability is smaller.
}

\subsection{Contribution}

The  main goal of this work is to investigate the overall system performance as well as the  specific role of RIS and RSMA in URLLC systems. We not only show that RIS and/or RSMA can significantly improve the system performance, but also clarify the role of RIS and RSMA in such improvements.
To this end, 
{we propose an optimization framework for RSMA to enhance SE and EE of MISO (BD-)RIS-assisted URLLC systems, which can be applied to every interference-limited system with 1-layer RS.} As shown in Table \ref{table-1} and discussed in Section \ref{sec-i-a}, there are a limited number of works on RSMA in RIS-assisted URLLC systems, and to the best of our knowledge, the SE of RSMA in multiple-antenna RIS-assisted systems with FBL regimes has not been studied yet.
{Thus, it is required to develop 
a general optimization framework for RSMA in multiple-antenna RIS-assisted URLLC systems  that can provide a solution for a large family of optimization problems, including various SE and EE metrics such as the  minimum weighted rate, weighted sum rate, minimum weighted EE, and global EE.}  
In this paper, we address this issue by proposing a framework to solve every optimization problem
in which the objective and/or constraints are linear functions of the rates and/or EE of users and/or the received powers. 
{
Note that in \cite{soleymani2023spectral}, we proposed resource allocation schemes for STAR-RIS-assisted URLLC systems with TIN. However, in this work, we extend the results in \cite{soleymani2023spectral} to RSMA and consider another technology for RIS (i.e., BD-RIS). Indeed, not only the transmission strategy at the BSs is different from \cite{soleymani2023spectral}, but also the RIS technologies vary in these two papers.}

To clarify the role of RIS/RSMA, we define two operational regimes based on the number of users and the number of BS antennas. We call the system underloaded if the number of BS antennas is higher than the number of users per cell. 
Otherwise, we call the system overloaded. {
We show that  RIS and RSMA are mutually beneficial tools in overloaded systems; however, the benefits of RSMA decrease by employing RIS in underloaded systems.} 
The reason is that   
 the interference level is lower in underloaded systems than in overloaded systems. 
Hence, interference in underloaded systems can be managed in a simpler way by SDMA and optimizing channels through RIS. However, in   overloaded systems, we require more powerful interference-management techniques such as RSMA to mitigate interference. We also show that RIS with TIN may even perform worse than RSMA without employing RIS in an overloaded system, which shows the importance of RSMA. To sum up, the role of RSMA is to manage interference especially in overloaded systems. The role of RIS is mainly to improve the coverage in both underloaded and overloaded systems, as well as to partly manage interference in underloaded systems.   

We, moreover, aim at investigating the impact of the reliability and latency constraints on the performance of RSMA. We show that the benefits of RSMA in overloaded systems increase when the  reliability constraint is more stringent and/or when the packet lengths are shorter. This shows that RSMA can enhance reliability and ensure a lower latency. We also show that the benefits of RSMA increase with the number of users per cell. The reason is that the interference level increases with the number of users per resources, which makes RSMA as an interference-management technique more beneficial. Additionally, we show that the benefits of RSMA decrease with the number of antennas at BSs. When the number of BS antennas increases, indeed the number of resources increases, which makes it easier to manage interference by SDMA. We also show that RIS can significantly improve the EE and SE of the system even with a relatively low number of RIS elements per users.

In this paper, we also develop optimization techniques for  BD-RIS with a group-connected architecture of group size two. To the best of our knowledge, this is the first wok that studies BD-RIS in URLLC systems. 
We consider two feasibility sets for optimizing the BD/diagonal RIS elements and show that 
RIS (either regular or beyond diagonal) can significantly improve the system performance. Moreover, we show that BD-RIS with group-connected architecture of group size two can outperform a regular RIS.

\subsection{Paper outline}
This paper is organized as follows. Section \ref{sec-sym} presents the system model and formulates the problem. 
Section \ref{seciii} proposes schemes to optimize the beamforming vectors. 
Section \ref{seciv} provides solutions for optimizing the BD-RIS elements. 
Section \ref{secnum} presents some numerical results. 
Finally, Section \ref{seccon} concludes the paper.  
\section{System model}\label{sec-sym}

\begin{figure}[t!]
    \centering
\includegraphics[width=.48\textwidth]{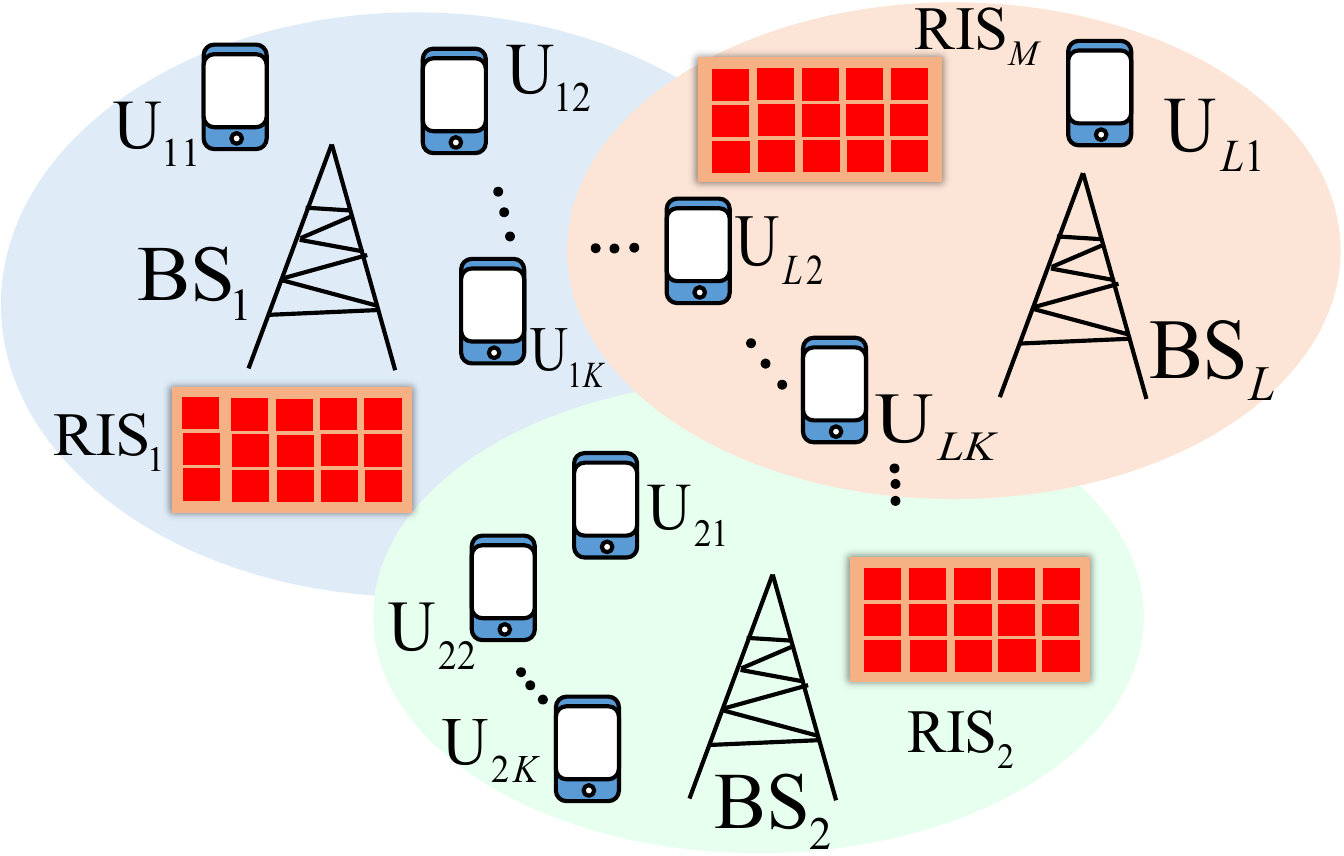}
     \caption{A multicell broadcast channel with RIS.}
	\label{Fig-sys-model}
\end{figure}
{
We propose an optimization framework for 1-layer RS, which can be applied to any interference-limited (BD-)RIS-assisted URLLC system and is able to provide a solution for a large family of  optimization problems in which the objective/utility function and/or constraints are linear functions of the rates and/or EE of users and/or the received power.} 
As an illustrative example, we consider a multicell broadcast channel (BC) with $L$ multiple-antenna base stations (BSs) with $N_{BS}$ transmit antennas, as shown in Fig. \ref{Fig-sys-model}. We assume that each BS serves $K$ single-antenna users, and there are $M\geq L$ BD-RIS and/or regular RISs with $N_{RIS}$ components each. 
{
Moreover, perfect, global and instantaneous CSI is assumed to be available at all transceivers similar to many other works on RIS \cite{ huang2019reconfigurable, wu2019intelligent, soleymani2022noma, pan2020multicell, soleymani2022improper, soleymani2022rate}. 
It should also be emphasized that this is a common assumption when proposing  resource allocation schemes for URLLC systems  \cite{nasir2020resource, wang2023flexible, ghanem2021joint,  ghanem2020resource, soleymani2023spectral}. In this case, it is assumed that the coherence time of the channels is large, and the system remains static for a long time. Thus, after acquiring CSI, the obtained beamforming vectors, RSMA parameters and RIS elements solutions can be used over many time slots. Therefore, the overhead associated with channel estimation (as well as that associated with obtaining the solution) is typically not considered to increase latency. However, such schemes may not be applicable for fast fading systems.
Additionally, it should be noted that considering perfect CSI can provide an upper bound for the benefits of all the schemes. If such benefits are minor  even with perfect CSI, then it may happen that RSMA and RIS cannot be effective in URLLC systems with imperfect CSI.} 
\subsection{RIS model}\label{sec-ris}
In this paper, we consider single-sector BD-RISs in the reflective mode with group-connected architecture of group size two.
In this case, the channel between BS $i$ and user $k$ associated to BS $l$, denoted by u$_{lk}$, is 
\begin{equation}\label{ch}
\mathbf{h}_{lk,i}\!\left(\{\bm{\Theta}\}\right)\!
=\!\underbrace{\sum_{m=1}^M\mathbf{f}_{lk,m}\bm{\Theta}_m
\mathbf{G}_{mi}}_{\text{Links through RIS}}+\!
\underbrace{\mathbf{d}_{lk,i}}_{\text{Direct link}}\!\!\in \mathbb{C}^{1\times N_{BS}},
\end{equation}
where $\mathbf{d}_{lk,i}\in \mathbb{C}^{1\times N_{BS}}$ is the direct link between BS $i$ and u$_{lk}$, $\mathbf{G}_{mi}\in \mathbb{C}^{N_{RIS}\times N_{BS}}$ is the channel matrix between BS $i$ and BD-RIS $m$, $\mathbf{f}_{lk,m}\mathbb{C}^{1\times N_{RIS}}$ is the channel vector between BD-RIS $m$ and u$_{lk}$, $\{\bm{\Theta}\}=\{\bm{\Theta}_m\}$ is the set containing all the BD-RIS components. For a regular RIS, $\bm{\Theta}_m$ is a diagonal matrix, given by
$\bm{\Theta}_m
=\text{diag}\left(\theta_{m_1}, \theta_{m_2},\cdots,\theta_{m_{N_{RIS}}}\right)$, 
where $\theta_{m_i}$ is the coefficient corresponding to the $i$-th element of RIS $m$.
However, in a BD-RIS, the diagonality assumption is relaxed, and  $\bm{\Theta}_m$ is a symmetric non-diagonal matrix. 
For BD-RIS with group-connected architecture of group size two, $\bm{\Theta}_m$ is a block-diagonal matrix as 
\begin{align}
\bm{\Theta}_m=\text{diag}(\bm{\Theta}_{m_1},\bm{\Theta}_{m_2},\cdots,\bm{\Theta}_{m_G}),\,\,\,\bm{\Theta}_{m_g}=\bm{\Theta}_{m_g}^T,
\end{align}
where $\bm{\Theta}_{m_g}$ for all $m,g$ is a 2-by-2 symmetric matrix, and $G=\frac{N_{RIS}}{2}$. {
Note that without loss of generality, we assume that $N_{RIS}$ is an even number since it is assumed that each RIS element is connect to another RIS element. If $N_{RIS}$ is an odd number, there is one RIS element, which is single connected, and our analysis can easily include this case as well.}

 There can be two different constraints for the symmetric matrices $\bm{\Theta}_{m_g}$s. First,
we have the convex constraint $\bm{\Theta}_{m_g}\bm{\Theta}_{m_g}^H\preceq{\bf I}$ for all $m,g$, which results in the following feasibility set:
 \begin{align} 
\mathcal{T}_U=\{\bm{\Theta}_{m_g}=\bm{\Theta}_{m_g}^T,\bm{\Theta}_{m_g}\bm{\Theta}_{m_g}^H\preceq{\bf I},\forall m,g\}
\end{align}
Second, we have 
$\bm{\Theta}_{m_g}\bm{\Theta}_{m_g}^H={\bf I}$, 
which yields 
\begin{align} 
\mathcal{T}_I=\{\bm{\Theta}_{m_g}=\bm{\Theta}_{m_g}^T,\bm{\Theta}_{m_g}\bm{\Theta}_{m_g}^H={\bf I},\forall m,g\}
\end{align}
Note that $\mathcal{T}_I\subset \mathcal{T}_U$. Indeed, $\mathcal{T}_U$ includes $\mathcal{T}_I$ as a special case and should not perform worse than $\mathcal{T}_I$.
Note that if u$_{lk}$ (or BS $i$) is not in the reflection space of RIS $m$, then we have $\mathbf{f}_{lk,m}=\mathbf{0}$ (or $\mathbf{G}_{mi}={\bf 0}$). In other words, in order to get a signal through a reflective BD-RIS, the transceivers should be in the reflection space of the BD-RIS. For more details on different architectures of BD-RIS, we refer the reader to \cite{li2022beyond}. 
{
It should be noted that realizing a BD-RIS is more challenging than a regular RIS since the BD-RIS elements should be connected through a proper circuit design. Additionally, optimizing the BD-RIS elements is more complicated than optimization of a regular RIS due to the unitary and symmetric constraints. However, BD-RIS is more general than a regular RIS, and an optimal BD-RIS scheme never performs worse than any scheme with regular RIS.}
Hereafter, for notational simplicity, we drop the dependency of the channels on $\bm{\Theta}_m$ and represent the channels as $\mathbf{h}_{lk,i}$ for all $i,l,k$.

{
Note that there are more assumptions regarding the coefficients of regular RISs, as mentioned in \cite[Sec. II.B]{wu2021intelligent}. However, to the best of our knowledge, there is no other feasibility set for BD-RIS coefficients (apart from $\mathcal{T}_U$ and $\mathcal{T}_I$) in the literature. Indeed, we are not aware of any implementation of BD-RIS in practice, and the feasibility sets $\mathcal{T}_U$ and $\mathcal{T}_I$ are mainly based on the fact that the BD-RIS is assumed to operate in a passive mode. Such feasibility sets may provide an upper bound for the performance of BD-RIS, which can be useful to clarify the limitations of BD-RIS and/or to investigate how much gain can be obtained by moving beyond the diagonality assumption for the RIS coefficients.
}

\subsection{Signal model}
We consider the 1-layer RS to manage intra-cell interference. In the 1-layer RS, each BS transmits one common message, to be decoded by all its associated users, in addition to $K$ private messages intended for each individual user as 
\begin{equation}
\mathbf{x}_l=
\underbrace{
\mathbf{x}_{c,l}s_{c,l}
}_{\text{Common message}}
+
\underbrace{
\sum_{k=1}^K\mathbf{x}_{p,lk}s_{p,lk}
}_{\text{Private messages}}
\in\mathbb{C}^{N_{BS}\times 1},
\end{equation}
 where $s_{c,l}\sim\mathcal{CN}\left(0,1\right)$ is the common message of BS $l$, and $s_{p,lk}\sim\mathcal{CN}\left(0,1\right)$ is the private message intended for u$_{lk}$. 
Moreover, $\mathbf{x}_{c,l}$ and $\mathbf{x}_{p,lk}$ are, respectively, the beamforming vectors corresponding to the common message $s_{c,l}$  and private message $s_{p,lk}$. 
Note that $s_{c,l}$ and $s_{p,lk}$ for all $l,k$ are independent and identically distributed 
proper Gaussian signals. {Moreover, the transmission power of the common message at BS $l$ (private message intended for u$_{lk}$) is $\mathbf{x}_{c,l}^H\mathbf{x}_{c,l}$ ($\mathbf{x}_{p,lk}^H\mathbf{x}_{p,lk}$).}

The received signal for the user u$_{lk}$ is
\begin{align}\label{rec-sig}
y_{lk}&
\nonumber
=
\underbrace{\mathbf{h}_{lk,l}
\mathbf{x}_{c,l}s_{c,l}}_{\text{Desired C. signal}}+
\underbrace{\mathbf{h}_{lk,l}
\mathbf{x}_{p,lk}s_{p,lk}}_{\text{Desired P. signal}}+
\underbrace{\mathbf{h}_{lk,l}
\sum_{j=1,j\neq k}^{K}\mathbf{x}_{p,lj}
s_{p,lj}}_{\text{Intracell interference}}
\\&\hspace{1cm}
+
\underbrace{ \sum_{i=1,i\neq l}^L\mathbf{h}_{lk,i}
\mathbf{x}_{i}}_{\text{Intercell interference}}+
\underbrace{n_{lk}}_{\text{Noise}},
\end{align}
where 
$\mathbf{h}_{lk,i}\in\mathbb{C}^{1\times N_{BS}}$ is the  channel between BS $l$ and u$_{lk}$, given by \eqref{ch}, and $n_{lk}\sim\mathcal{CN}\left(0,\sigma^2\right)$ is additive  white Gaussian noise, which is independent of the transmitted signals. 

\subsection{Rate and energy-efficiency expressions}
Each user first decodes the common message, treating all other signals as noise. 
Thus, the rate of decoding $s_{c,l}$ at u$_{lk}$ is  \cite{polyanskiy2010channel}, \cite[Eq. (8)]{nasir2021cell}
\begin{equation}\label{rate}
\bar{r}_{c,lk}=
\underbrace{\log\left(1+\gamma_{c,lk}\right)}_{\text{\small Shannon Rate}}
-
\underbrace{Q^{-1}(\epsilon_c)\sqrt{\frac{V_{c,lk}}{n_c}}}_{\delta_{c,lk}(\{\mathbf{x}\},\{\bm{\Theta}\})},
\end{equation}
where $V_{c,lk}$ is the channel dispersion for decoding $s_{c,l}$ at u$_{lk}$, $n_{c}$ is the packet length of the common message in bits, 
$\epsilon_c$ is the decoding error probability of the common message, 
and  $\gamma_{c,lk}$ is the corresponding SINR given by
\begin{equation}\label{sinr}
\gamma_{c,lk}=\frac{|\mathbf{h}_{lk,l}
\mathbf{x}_{c,l}|^2}{\sigma^2+\sum_{i\neq l}
|\mathbf{h}_{lk,i}
\mathbf{x}_{c,i}|^2
+\sum_{ij}|\mathbf{h}_{lk,i}
\mathbf{x}_{p,ij}|^2}.
\end{equation}
The optimal channel dispersion is \cite{polyanskiy2010channel}
\begin{equation}\label{dis-opt}
V_{c,lk}^{opt}=1-\frac{1}{\left(1+\gamma_{c,lk}\right)^2},
\end{equation}
which is not achievable by Gaussian signals in the presence of interference \cite{scarlett2016dispersion}.
An achievable channel dispersion for Gaussian signals in interference-limited systems  is \cite{scarlett2016dispersion}
\begin{equation}\label{dis-}
V_{c,lk}=2\frac{\gamma_{c,lk}}{1+\gamma_{c,lk}}.
\end{equation}
The common message $s_{c,l}$ should be decodable for all the users associated to BS $l$. Hence, the transmission rate of $s_{c,l}$ must be smaller than or equal to the minimum achievable rate of decoding $s_{c,l}$ at the users associated to BS $l$, i.e., 
\begin{equation}\label{fis=}
r_l
\leq \min_{
k}\left\{\bar{r}_{c,lk}\right\}\triangleq r_{c,l}.
\end{equation}

Each user decodes and cancels the common message. After this, it decodes its own private message, treating the remaining signals as noise. Thus, the decoding rate of $s_{p,lk}$ at u$_{lk}$ is  \cite{polyanskiy2010channel}, \cite[Eq. (8)]{nasir2021cell}
\begin{equation}\label{rate}
r_{p,lk}=
\underbrace{\log\left(1+\gamma_{p,lk}\right)}_{\text{\small Shannon Rate}}
-
\underbrace{Q^{-1}(\epsilon_p)\sqrt{\frac{V_{p,lk}}{n_p}}}_{\delta_{p,lk}(\{\mathbf{x}\},\{\bm{\Theta}\})},
\end{equation}
where  $\epsilon_p$ is the decoding error probability of the private message, $n_{p}$ is the packet length of the private message in bits, and  $\gamma_{p,lk}$ is the SINR for decoding $s_{p,lk}$ given by
\begin{equation}\label{sinr=p}
\gamma_{p,lk}=\frac{|\mathbf{h}_{lk,l}
\mathbf{x}_{p,lk}|^2}{\sigma^2\!+\!\sum_{i\neq l}
|\mathbf{h}_{lk,i}
\mathbf{x}_{c,i}|^2\!
+\!\sum_{[ij]\neq [lk]}\!|\mathbf{h}_{lk,l}
\mathbf{x}_{p,lj}|^2
},
\end{equation}
where $
\sum_{[ij]\neq [lk]}|\mathbf{h}_{lk,i}
\mathbf{x}_{p,ij}|^2=\sum_{ij}|\mathbf{h}_{lk,i}
\mathbf{x}_{p,ij}|^2-|\mathbf{h}_{lk,l}
\mathbf{x}_{p,lk}|^2$. 
Similarly, an achievable channel dispersion is \cite{scarlett2016dispersion}
\begin{equation}\label{dis-p}
V_{p,lk}=2\frac{\gamma_{p,lk}}{1+\gamma_{p,lk}}.
\end{equation}
Finally, the rate of u$_{lk}$ is 
\begin{equation}
r_{lk}=r_{p,lk}+r_{c,lk},
\end{equation}
where $r_{c,lk}\geq 0$ is the portion of the rate that u$_{lk}$ gets from the common message. Note that $\sum_kr_{c,lk}\leq r_{c,l}$, where $r_{c,l}$ is given by \eqref{fis=}.

Finally, the EE of u$_{lk}$ is defined as \cite{zappone2015energy}
\begin{equation}
e_{lk}=\frac{r_{lk}}{p_c+\eta\left(\mathbf{x}_{p,lk}^H\mathbf{x}_{p,lk}+\mathbf{x}_{c,l}^H\mathbf{x}_{c,l}/K\right)},
\end{equation}
where $\eta^{-1}$ is the power efficiency of the BSs, and $p_c$ is the constant power consumption to transmit data to each user, given by \cite[Eq. (27)]{soleymani2022improper}. 

\subsection{Discussion on reliability and latency constraints}
The reliability constraint is modeled by the decoding error probabilities, $\epsilon_c$ and $\epsilon_p$. The total decoding error probability for a user can be upper bounded as
\begin{equation} 
\epsilon_t=\epsilon_c+(1-\epsilon_c)\epsilon_p\approx \epsilon_c+\epsilon_p.
\end{equation}
In general, the error probability of decoding private and/or common messages can be different at each user, based on their service requirements. To simplify the notations, we consider a symmetric system in which the decoding error probability of the private messages are the same at all users. However, this framework can be easily modified for asymmetric scenarios. 

The latency constraint can be also translated to a rate constraint. The reason is that, if the latency for a packet with length $n$ bits should be less than $T$ seconds, then its transmission rate should be higher than $r\geq \frac{n}{\beta T}$ (b/s/Hz), where $\beta$ is the used bandwidth. 
Thus, the rate of u$_{lk}$ should be $r_{lk}\geq r_{lk}^{th}=\frac{n_p+n_c}{\beta T}$ (b/s/Hz), where $T$ is the latency constraint. 

Note that $\epsilon_t$ and $r_{lk}^{th}$ are upper bounds for the decoding error probability and the latency, respectively, since it may happen that u$_{lk}$ receives its rate from only the common message \cite{clerckx2019rate}. In this case, $\epsilon_t=\epsilon_c$ and $r_{lk}^{th}=\frac{n_c}{\beta T}$. However, in this paper, we consider the upper bounds to ensure that the latency and reliability constraints are met.

\subsection{Problem statement}
We consider a general optimization problem,  similar to, e.g., \cite[Eq. (29)]{mao2022rate}, as 
\begin{subequations}\label{ar-opt}
\begin{align}
\label{12-a}
 \underset{\{\mathbf{x}\}\in\mathcal{X},\{\bm{\Theta}\}\in\mathcal{T},\mathbf{r}_c
 }{\max}
& 
  f_0\left(\left\{\mathbf{x}\right\},\{\bm{\Theta}\}\right) &\\
\text{s.t.}   \,\,\,\,\,\,\,\,\,\,
&  f_i\left(\left\{\mathbf{x}\right\},\{\bm{\Theta}\}\right)\geq0,
\forall i,
\\
\label{4-c}
 &
r_{lk}\geq r_{lk}^{th},
\forall l,k,
\\\label{4-b}
 &  \sum_{k=1}^K\!\!r_{c,lk}\!\leq\! \min_{k}\!\left\{\bar{r}_{c,lk}\right\}\!\triangleq\! r_{c,l}(\{\mathbf{x}\})\!,
\forall lk,
\\
\label{4-d}
&
r_{c,lk}\geq 0,
\forall l,k,
 \end{align}
\end{subequations}
where $\{\mathbf{x}\}$ is the set of the beamforming vectors, $\mathcal{X}$ is the feasibility set for the beamforming vectors, $\mathbf{r}_c=\left\{r_{c,lk},\forall lk\right\}$ is the set of the common rates, $\mathcal{T}$ is the feasibility set for RIS components, which can be either $\mathcal{T}_I$ or $\mathcal{T}_U$, 
$f_i$s are linear functions of rates/EEs and/or concave/convex/linear functions of beamforming vectors and/or channels,
  \eqref{4-c} is the latency constraint, \eqref{4-b} is the  decodability constraint of each common message,  \eqref{4-d} is because of non-negative rates. {
The variables 
  $\mathbf{r}_c$, $\{\mathbf{x}\}$ and $\{\bm{\Theta}\}$ are the optimization parameters. 
Indeed, we jointly optimize the RSMA parameters, beamforming vectors, transmission powers and RIS elements.}
Finally, the feasibility set $\mathcal{X}$ is
\begin{equation}
\mathcal{X}=\left\{\mathbf{x}_{p,lk},\mathbf{x}_{c,l}:\mathbf{x}_{c,l}^H\mathbf{x}_{c,l}
+\sum_{k}\mathbf{x}_{p,lk}^H\mathbf{x}_{p,lk}\leq p_l,\forall l\right\},
\end{equation}
where $p_l$ is the power budget of BS $l$. 

The general optimization problem \eqref{ar-opt} includes  a large family of optimization problems for enhancing SE and EE of the system. For instance, \eqref{ar-opt} includes  the minimum weighted rate maximization (MWRM), weighted sum rate maximization (WSRM),  minimum weighted EE maximization (MWEEM), global EE maximization (GEEM) problems, to mention a few. 
\vspace{-.6cm}
\section{Proposed optimization framework for optimizing beamforming vectors}\label{seciii}
Our proposed optimization framework is an iterative optimization technique, based on majorization minimization (MM) and alternating optimization (AO). That is, we first fix RIS components to $\{\bm{\Theta}^{(t-1)}\}$ and solve \eqref{ar-opt} over the beamforming vectors $\{\mathbf{x}\}$ to obtain $\{\mathbf{x}^{(t)}\}$. We then alternate the optimization parameters and solve \eqref{ar-opt} for fixed beamforming vectors $\{\mathbf{x}^{(t)}\}$ to obtain $\{\bm{\Theta}^{(t)}\}$. 
We continue this procedure until a convergence metric is met. 
{
To obtain a feasible initial point, we employ the approach in \cite[Appendix A]{soleymani2023spectral} and maximize the minimum of $\gamma_{p,lk}$ and $\gamma_{c,lk}$ for all $l,k$ until the rates $r_{p,lk}$ and $r_{c,lk}$ for all $l,k$ are positive. The reason is that, the rates with FBL can be negative if the SINR is very low as shown in \cite[Lemma 2]{soleymani2023spectral}, which is not a feasible initial point. Note that it is not ensured that AO and/or MM converge to a global optimal solution. However, to the best of our knowledge, there is no work on multi-user and multiple-antenna RIS-assisted systems that obtains the global optimal solution of the general optimization problem in \eqref{ar-opt}, even with a simpler scenario, i.e., considering Shannon rates and/or TIN. Additionally, most of the existing works on RIS (e.g., \cite{wu2019intelligent, huang2019reconfigurable, soleymani2022noma, pan2020multicell}) employ AO to jointly optimize the transmission parameters and RIS elements. Hence, we employ AO to tackle \eqref{ar-opt}.}

In this section, we present our schemes to optimize beamforming vectors as well as the RSMA parameters.
To update $\{\mathbf{x}\}$, we solve  \eqref{ar-opt} for fixed $\{\bm{\Theta}^{(t-1)}\}$, which is a complicated non-convex optimization problem. To this end, we employ an MM-based approach. 
That is, we first find suitable surrogate functions for the rates and then, 
solve the corresponding surrogate optimization problem. 
{
The surrogate functions should be  concave lower bounds for the rates, satisfying the three conditions in \cite[Lemma 3]{soleymani2020improper} to ensure a convergence to a stationary point of \eqref{ar-opt}. Note that our proposed framework for systems without RIS converges to a stationary point of \eqref{ar-opt} since the surrogate functions fulfill these conditions. We will discuss the optimality conditions of our framework for systems with (BD-)RIS in the next section.}
In the following lemma, we present concave lower bounds for $r_{p,lk}$ and $\bar{r}_{c,lk}$ that satisfy the conditions in \cite[Lemma 3]{soleymani2020improper}.
\begin{lemma}\label{lem-1}
Concave and quadratic lower bounds for $r_{p,lk}$ and $\bar{r}_{c,lk}$ are, respectively, 
\begin{multline}\label{eq=lem=q}
\!\!\!\!r_{p,lk}\!\!\geq\!\! \tilde{r}_{p,lk}\!\!=\!\!
a_{p,lk}\!
+\!
{2c_{lk}\mathfrak{R}
\left\{
\left(
\mathbf{h}_{lk,l}
\mathbf{x}_{p,lk}^{(t-1)}
\right)^*
\mathbf{h}_{lk,l}
\mathbf{x}_{p,lk}
\right\}
}
\\
+
{\sum_{i\neq l}
f_{p,lk}d_{lk}\mathfrak{R}\!\left\{\!\left(\mathbf{h}_{lk,i}
\mathbf{x}_{c,i}^{(t-1)}\right)^*\!
\mathbf{h}_{lk,i}
\mathbf{x}_{c,i}\!
\right\}
}
\\
+{
\sum_{[ij]\neq [lk]}f_{p,lk}d_{lk}\mathfrak{R}\!\left\{\!\left(\mathbf{h}_{lk,i}
\mathbf{x}_{p,ij}^{(t-1)}\right)^*\!
\mathbf{h}_{lk,i}
\mathbf{x}_{p,ij}\!
\right\}
}
\\
-
b_{p,lk}d_{lk}
{
\left(
\sum_{ij}\left|\mathbf{h}_{lk,i}
\mathbf{x}_{p,ij}\right|^2
+
\sum_{i\neq l}\left|\mathbf{h}_{lk,i}
\mathbf{x}_{c,i}\right|^2
\right),
}
\end{multline}
\begin{multline}\label{eq=lem=q--}
\!\!\!\bar{r}_{c,lk}\!\!\geq\!\! \tilde{r}_{c,lk}\!\!=\!
a_{c,lk}
+
2d_{lk}\mathfrak{R}
\left\{
\left(
\mathbf{h}_{lk,l}
\mathbf{x}_{c,l}^{(t-1)}
\right)^*
\mathbf{h}_{lk,l}
\mathbf{x}_{c,l}
\right\}
\\
+\sum_{ij}e_{lk}f_{c,lk}\mathfrak{R}\!\left\{\!\left(\mathbf{h}_{lk,i}
\mathbf{x}_{p,ij}^{(t-1)}\right)^*\!
\mathbf{h}_{lk,i}
\mathbf{x}_{p,ij}\!
\right\}
\\
+
\sum_{i\neq l}e_{lk}f_{c,lk}\mathfrak{R}\!\left\{\!\left(\mathbf{h}_{lk,i}
\mathbf{x}_{c,i}^{(t-1)}\right)^*\!
\mathbf{h}_{lk,i}
\mathbf{x}_{c,i}\!
\right\}
\\
-
b_{p,lk}e_{lk}
\left(
\sum_{ij}\left|\mathbf{h}_{lk,i}
\mathbf{x}_{p,ij}\right|^2
+
\sum_{i}\left|\mathbf{h}_{lk,i}
\mathbf{x}_{c,i}\right|^2
\right),
\end{multline}
where $t$ is the number of the current iteration, $a_{c,lk}$, $b_{c,lk}$, $a_{p,lk}$, $b_{p,lk}$,  $c_{lk}$, $d_{lk}$, $e_{lk}$, $f_{p,lk}$, and $f_{c,lk}$
are constants, given by
\begin{align*}
a_{p,lk}&=\log\left(1+\gamma_{p,lk}^{(t-1)}\right)
-
\gamma_{p,lk}^{(t-1)}+(f_{p,lk}-b_{p,lk})d_{lk}\sigma^2
\\&\hspace{1cm}
-
\frac{Q^{-1}(\epsilon_p)}{\sqrt{n_p}}\left(\frac{\sqrt{V_{p,lk}^{(t-1)}}}{2}
+\frac{1}{\sqrt{V_{p,lk}^{(t-1)}}}\right),
\\
a_{c,lk}&=\log\left(1+\gamma_{c,lk}^{(t-1)}\right)
-
\gamma_{c,lk}^{(t-1)}+(f_{c,lk}-b_{c,lk})e_{lk}\sigma^2
\\&\hspace{1cm}
-
\frac{Q^{-1}(\epsilon_c)}{\sqrt{n_c}}\left(\frac{\sqrt{V_{c,lk}^{(t-1)}}}{2}
+\frac{1}{\sqrt{V_{c,lk}^{(t-1)}}}\right),
\\
b_{p,lk}&=\!\gamma_{p,lk}^{(t)}\!+\!
\frac{\zeta_{p,lk}^{(t-1)}f_{p,lk}}{2},\hspace{.2cm}
b_{c,lk}\!
=\!\gamma_{c,lk}^{(t)}\!+\!
\frac{\zeta_{c,lk}^{(t-1)}f_{c,lk}}{2}, 
\\
f_{p,lk}&=\frac{2Q^{-1}(\epsilon_p)}{\sqrt{n_pV_{p,lk}^{(t-1)}}},
\hspace{.2cm}
f_{c,lk}=\frac{2Q^{-1}(\epsilon_c)}{\sqrt{n_cV_{c,lk}^{(t-1)}}},
\\
c_{lk}&=\!\!\left(\!\!\sigma^2\!\!+\!\sum_{i\neq l}\!
|\mathbf{h}_{lk,i}
\mathbf{x}_{c,i}^{(t-1)}|^2\!\!
+\!\!\!\!
\sum_{[ij]\neq [lk]}
\!\!|\mathbf{h}_{lk,l}
\mathbf{x}_{p,lj}^{(t-1)}|^2\!\!\right)^{-1}\!\!\!\!\!\!,\,
\\
d_{lk}\!&
=\!\!\left(\!\!
\sigma^2\!\!+\!\sum_{ij}\!\left|\mathbf{h}_{lk,i}\mathbf{x}_{p,ij}^{(t-1)}\right|^2\!\!+\!\!\sum_{i\neq l}\!\left|\mathbf{h}_{lk,i}\mathbf{x}_{c,i}^{(t-1)}\right|^2\!\!\right)^{-1},
\\
e_{lk}&=\left(\sigma^2+\sum_{ij}\left|\mathbf{h}_{lk,i}\mathbf{x}_{p,ij}^{(t-1)}\right|^2
+\sum_{i}\left|\mathbf{h}_{lk,i}\mathbf{x}_{c,i}^{(t-1)}\right|^2\right)^{-1},
\end{align*}
where $\gamma_{c,lk}^{(t-1)}$, $V_{c,lk}^{(t-1)}$, $\gamma_{p,lk}^{(t-1)}$ and $V_{p,lk}^{(t-1)}$ are, respectively, obtained by replacing $\{\mathbf{x}^{(t-1)}\}$ in \eqref{sinr}, \eqref{dis-}, \eqref{sinr=p} and \eqref{dis-p}. Moreover, $\zeta_{p,lk}^{(t-1)}=d_{lk}c_{lk}^{-1}$ and $\zeta_{c,lk}^{(t-1)}=e_{lk}d_{lk}^{-1}$.
\end{lemma}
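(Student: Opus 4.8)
The plan is to construct the surrogate functions by separately lower-bounding the Shannon-rate part $\log(1+\gamma)$ and upper-bounding the FBL penalty term $\delta(\{\mathbf{x}\},\{\bm{\Theta}\})=Q^{-1}(\epsilon)\sqrt{V/n}$, then combining them and verifying the three MM conditions of \cite[Lemma 3]{soleymani2020improper}. For the Shannon part, I would first rewrite $\log(1+\gamma_{p,lk})$ using the standard concave-in-$\gamma$ lower bound $\log(1+\gamma)\geq \log(1+\gamma^{(t-1)})+1-\frac{1+\gamma^{(t-1)}}{1+\gamma}$ evaluated to produce a term of the form $-b_{p,lk}(\text{interference}+\text{noise})+2c_{lk}\mathfrak{R}\{\cdot\}$ after an additional lower bound on the desired-signal numerator. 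Here the key tricks are: (i) the affine minorization of $\log$ at $\gamma^{(t-1)}$, (ii) bounding $|\mathbf{h}_{lk,l}\mathbf{x}_{p,lk}|^2\geq 2\mathfrak{R}\{(\mathbf{h}_{lk,l}\mathbf{x}_{p,lk}^{(t-1)})^*\mathbf{h}_{lk,l}\mathbf{x}_{p,lk}\}-|\mathbf{h}_{lk,l}\mathbf{x}_{p,lk}^{(t-1)}|^2$, which is the first-order Taylor lower bound of a convex quadratic, and (iii) noting the interference-plus-noise term enters linearly in the denominator so the reciprocal is convex and can be minorized only after multiplying through — the constant $c_{lk}$ is exactly that reciprocal interference-plus-noise evaluated at iteration $t-1$.

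Next I would handle the dispersion penalty $\sqrt{V_{p,lk}}=\sqrt{2\gamma_{p,lk}/(1+\gamma_{p,lk})}$. Since $-\sqrt{V}$ must be minorized (equivalently $\sqrt{V}$ majorized), I would use concavity of $\sqrt{\cdot}$: $\sqrt{V}\leq \frac{1}{2}\big(\sqrt{V^{(t-1)}}+V/\sqrt{V^{(t-1)}}\big)$, which explains the appearance of $\frac{\sqrt{V^{(t-1)}}}{2}+\frac{1}{\sqrt{V^{(t-1)}}}$ in $a_{p,lk}$ and the factor $f_{p,lk}=2Q^{-1}(\epsilon_p)/\sqrt{n_pV_{p,lk}^{(t-1)}}$. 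Then $V_{p,lk}=2\gamma_{p,lk}/(1+\gamma_{p,lk})=2(1-1/(1+\gamma_{p,lk}))$, and $1/(1+\gamma_{p,lk})$ is again a reciprocal of an affine (in the relevant signal powers) function, so it is convex and I minorize it by its first-order Taylor expansion at iteration $t-1$; this contributes additional linear-in-power terms which, when combined with the $\zeta_{p,lk}^{(t-1)}=d_{lk}c_{lk}^{-1}$ bookkeeping, merge into the coefficient $b_{p,lk}=\gamma_{p,lk}^{(t)}+\zeta_{p,lk}^{(t-1)}f_{p,lk}/2$ multiplying the aggregate interference term and into $f_{p,lk}$ multiplying the cross terms. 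The common-message bound \eqref{eq=lem=q--} follows the identical recipe with $\gamma_{c,lk}$, $V_{c,lk}$, the packet length $n_c$ and error $\epsilon_c$, the only structural difference being that the common SINR \eqref{sinr} has all private signals (including $\mathbf{x}_{p,lk}$) and all inter-cell common signals in its denominator, hence the sums run over $ij$ and over $i\neq l$ respectively, with $e_{lk}$ the corresponding reciprocal.

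Finally I would verify the three conditions of \cite[Lemma 3]{soleymani2020improper}: (a) global minorization $r_{p,lk}\geq\tilde r_{p,lk}$ for all feasible $\{\mathbf{x}\}$ — this is where I collect all the inequalities above and check the constant $a_{p,lk}$ is precisely what makes the two sides agree when evaluated at $\{\mathbf{x}^{(t-1)}\}$; (b) tightness at the current point $\tilde r_{p,lk}(\{\mathbf{x}^{(t-1)}\})=r_{p,lk}(\{\mathbf{x}^{(t-1)}\})$; and (c) gradient consistency $\nabla\tilde r_{p,lk}(\{\mathbf{x}^{(t-1)}\})=\nabla r_{p,lk}(\{\mathbf{x}^{(t-1)}\})$ — both (b) and (c) follow automatically because every bound used is a first-order Taylor minorization/majorization that is tight to first order at the expansion point, but I would still spell out the cancellation of the $\gamma^{(t-1)}$, $\sqrt{V^{(t-1)}}$ and $1/\sqrt{V^{(t-1)}}$ terms explicitly. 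Concavity (quadratic, in fact) is immediate since each retained term is either affine in $\{\mathbf{x}\}$ or a negative-definite quadratic $-\,\|\cdot\|^2$. I expect the main obstacle to be the careful accounting in condition (c): the dispersion term contributes a derivative through $V_{p,lk}$ that must exactly match the derivative coming from the composite bound, and reconciling the two requires the precise definitions of $b_{p,lk}$, $f_{p,lk}$ and $\zeta_{p,lk}^{(t-1)}$ — getting these constants to line up (rather than merely producing \emph{some} valid concave lower bound) is the delicate part, and I would double-check it by differentiating both $r_{p,lk}$ and $\tilde r_{p,lk}$ with respect to $\mathbf{x}_{p,lk}$ and each interfering beamformer at $\{\mathbf{x}^{(t-1)}\}$ and matching coefficients.
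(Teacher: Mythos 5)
Your proposal follows essentially the same route as the paper's Appendix A proof: a quadratic concave minorant for the Shannon term, the majorization $\sqrt{V}\le \tfrac{\sqrt{V^{(t-1)}}}{2}+\tfrac{V}{2\sqrt{V^{(t-1)}}}$ for the dispersion penalty, and a first-order minorization of $\gamma/(1+\gamma)$ rewritten as $1-I/(I+S)$, with the constants $a$, $b$, $f$, $\zeta$ assembling exactly as you describe. The one imprecision is your justification that $1/(1+\gamma)$ is convex as a ``reciprocal of an affine function'' of the received powers --- it is a linear-fractional, hence only quasiconvex, function of those powers; the valid global minorant instead comes from the joint convexity of each quadratic-over-linear term $|\mathbf{h}_{lk,i}\mathbf{x}|^2/y$ in the beamformer and the total received power $y$, which is precisely the inequality the paper invokes from its reference.
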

\begin{proof}
Please refer to Appendix \ref{ap-1}.
\end{proof}
Substituting $\tilde{r}_{p,lk}$s and $\tilde{r}_{c,lk}$s in $f_i$s gives the surrogate functions $\tilde{f}_i$s and consequently, the following surrogate optimization problem
\begin{subequations}\label{ar-opt-x-sur}
\begin{align}
\label{12-a-x-sur}
 \underset{\{\mathbf{x}\}\in\mathcal{X},\mathbf{r}_c
 }{\max}
& 
  \tilde{f}_0\left(\left\{\mathbf{x}\right\},\{\bm{\Theta}\}\right) &
  \\
 \,\,\,\,\,\,\,\,\,\,\,\, \,\, \text{s.t.}   \,\,\,\,\,\,\,\,\,\,&  \tilde{f}_i\left(\left\{\mathbf{x}\right\},\{\bm{\Theta}\}\right)\geq0,
\forall i,
\\
\label{4-c-x-sur}
 &
\tilde{r}_{lk}=\tilde{r}_{p,lk}+{r}_{c,lk}\geq r_{lk}^{th},
\forall l,k,
\\\label{4-b-x-sur}
 &  \sum_{k=1}^K\!\!r_{c,lk}\!\!\leq\! \min_{k}\!\left\{\tilde{r}_{c,lk}\right\}\!\!\triangleq\! \tilde{r}_{c,l}(\{\mathbf{x}\}),
\forall l,
\\
\label{4-d-x-sur}
&
r_{c,lk}\geq 0,
\forall l,k.
 \end{align}
\end{subequations}
This optimization problem is convex for spectral efficiency metrics, which can be efficiently solved by numerical tools.  
We can employ Dinkelbach-based algorithms to find the solution of \eqref{ar-opt-x-sur} for energy efficiency metrics such as GEE and weighted minimum EE.  Due to a space restriction, we do not provide the solutions here and refer the reader to \cite{zappone2015energy, soleymani2022rate} for more details.

\section{Proposed optimization framework for optimizing BD-RIS components}\label{seciv}
In this section, we solve \eqref{ar-opt} for fixed $\{\mathbf{x}^{(t)}\}$ to update the BD-RIS components. 
To this end, we employ an MM-based algorithm to obtain a suboptimal solution for the complicated optimization problem. 
For fixed $\{\mathbf{x}^{(t)}\}$, \eqref{ar-opt} is non-convex because of two reasons. First, the rates are not concave in $\{\bm{\Theta}\}$. Second, the set $\mathcal{T}_I$ is non-convex due to the constraint $\bm{\Theta}\bm{\Theta}^H={\bf I}$. To solve \eqref{ar-opt} for fixed $\{\mathbf{x}^{(t)}\}$, we have to handle these two issues. To this end, we first obtain suitable surrogate functions for the rates. Then we try to convexify $\mathcal{T}_I$. The rates have a similar structure with respect to the channels and beamforming vectors. Thus, we can employ an approach similar to Lemma \ref{lem-1} to find concave lower bounds for the rates with respect to $\{\bm{\Theta}\}$. In the following corollary, we state the concave lower bound for $\bar{r}_{c,lk}$. Due to a strict space restriction, we do not provide the concave lower bound for ${r}_{p,lk}$ since it is straightforward to obtain it from Lemma \ref{lem-1} and Corollary \ref{lem-2}.
\begin{corollary}\label{lem-2}
A concave lower bound for $\bar{r}_{c,lk}$ is
\begin{multline}\label{eq=lem=q-3}
\!\!\!\bar{r}_{c,lk}\!\!\geq\!\! \hat{r}_{c,lk}\!\!=\!
a_{c,lk}
+
2d_{lk}\mathfrak{R}
\left\{
\left(
\mathbf{h}_{lk,l}^{(t-1)}
\mathbf{x}_{c,lk}^{(t)}
\right)^*
\mathbf{h}_{lk,l}
\mathbf{x}_{c,lk}^{(t)}
\right\}
\\
+\sum_{ij}e_{lk}f_{c,lk}\mathfrak{R}\!\left\{\!\left(\mathbf{h}_{lk,i}^{(t-1)}
\mathbf{x}_{p,ij}^{(t)}\right)^*\!
\mathbf{h}_{lk,i}
\mathbf{x}_{p,ij}^{(t)}\!
\right\}
\\
+
\sum_{i\neq l}e_{lk}f_{c,lk}\mathfrak{R}\!\left\{\!\left(\mathbf{h}_{lk,i}^{(t-1)}
\mathbf{x}_{c,i}^{(t)}\right)^*\!
\mathbf{h}_{lk,i}
\mathbf{x}_{c,i}^{(t)}\!
\right\}
\\
-
b_{p,lk}e_{lk}
\left(
\sum_{ij}\left|\mathbf{h}_{lk,i}
\mathbf{x}_{p,ij}^{(t)}\right|^2
+
\sum_{i}\left|\mathbf{h}_{lk,i}
\mathbf{x}_{c,i}^{(t)}\right|^2
\right),
\end{multline}
where $\mathbf{h}_{lk,i}^{(t-1)}=\mathbf{h}_{lk,i}\left(\bm{\Theta}^{(t-1)}\right)$, and the other parameters are defined as in Lemma \ref{lem-1}.
\end{corollary}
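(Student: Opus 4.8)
The plan is to mirror the argument of Lemma~\ref{lem-1}, transporting the lower bound from being a function of the beamforming vectors $\{\mathbf{x}\}$ (with channels fixed) to being a function of the channels $\{\mathbf{h}_{lk,i}\}$ (with beamforming vectors fixed at $\{\mathbf{x}^{(t)}\}$). The key observation is the symmetry already noted in the text: the SINR expressions $\gamma_{c,lk}$ and $\gamma_{p,lk}$ are bilinear in the pair (channel, beamformer), so each term of the form $|\mathbf{h}_{lk,i}\mathbf{x}|^2$ is jointly a squared modulus of a bilinear form, and the same concavity structure that let us lower-bound $\bar r_{c,lk}$ in $\{\mathbf{x}\}$ works verbatim in $\{\bm{\Theta}\}$ once we observe that $\mathbf{h}_{lk,i}(\{\bm{\Theta}\})$ is affine (indeed linear plus a constant) in the RIS variables, by \eqref{ch}. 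First I would recall from the proof of Lemma~\ref{lem-1} the two elementary inequalities used there: the linearization $\log(1+\gamma)\ge \log(1+\gamma^{(t-1)}) - \gamma^{(t-1)} + \text{(affine-minorant terms)}$ obtained by combining concavity of $\log$ with a first-order bound on the reciprocal-SINR denominator, and the bound on the dispersion penalty $Q^{-1}(\epsilon_c)\sqrt{V_{c,lk}/n_c}$ via the concave envelope of $\sqrt{\cdot}$ together with $V_{c,lk}=2\gamma_{c,lk}/(1+\gamma_{c,lk})$. Both of these are statements about $\gamma_{c,lk}$ as a scalar, and hence are insensitive to whether $\gamma_{c,lk}$ is regarded as a function of $\{\mathbf{x}\}$ or of $\{\bm{\Theta}\}$.

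The concrete steps I would carry out: (1) substitute $\mathbf{h}_{lk,i}=\mathbf{h}_{lk,i}(\{\bm{\Theta}\})$ from \eqref{ch} and treat $\mathbf{x}_{c,i}^{(t)},\mathbf{x}_{p,ij}^{(t)}$ as frozen, so that each product $\mathbf{h}_{lk,i}\mathbf{x}$ becomes an affine function of the entries of $\{\bm{\Theta}_m\}$; (2) write $\gamma_{c,lk}$ as $\text{(numerator)}/\text{(denominator)}$ where both are (real, nonnegative) quadratics in $\{\bm{\Theta}\}$, and apply the same two-term minorization as in Lemma~\ref{lem-1}: linearize the numerator's contribution to $\log$ around $\bm{\Theta}^{(t-1)}$ using $\mathfrak{R}\{(\mathbf{h}^{(t-1)}_{lk,l}\mathbf{x}_{c,lk}^{(t)})^*\mathbf{h}_{lk,l}\mathbf{x}_{c,lk}^{(t)}\}$ as the cross term, and keep the denominator's quadratic with a negative sign (which is concave since it is a negative sum of squared moduli of affine forms); (3) bound the dispersion penalty exactly as in Lemma~\ref{lem-1}, which reproduces the constants $a_{c,lk}$, $f_{c,lk}$, $b_{c,lk}$ and the factors $d_{lk}$, $e_{lk}$ unchanged — they depend only on $\gamma^{(t-1)},V^{(t-1)}$ evaluated at $\bm{\Theta}^{(t-1)}$; (4) collect terms to arrive at \eqref{eq=lem=q-3}, and check that the right-hand side is concave in $\{\bm{\Theta}\}$: the linear terms are trivially concave, and the $-b_{p,lk}e_{lk}(\cdots)$ term is $-$(sum of squared moduli of affine functions of $\{\bm{\Theta}\}$), hence concave; (5) verify the three MM conditions of \cite[Lemma 3]{soleymani2020improper} — value matching at $\bm{\Theta}^{(t-1)}$, global minorization, and gradient matching — which follow from the corresponding verifications in Lemma~\ref{lem-1} since the substitution is through an affine map and affine reparametrization preserves all three properties.

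The main obstacle I anticipate is bookkeeping rather than conceptual: one must be careful that the ``channel'' appearing in the cross (linear) terms is $\mathbf{h}_{lk,i}^{(t-1)}=\mathbf{h}_{lk,i}(\bm{\Theta}^{(t-1)})$ while the ``channel'' in the quadratic penalty term is the free variable $\mathbf{h}_{lk,i}=\mathbf{h}_{lk,i}(\{\bm{\Theta}\})$, and that $\mathbf{h}_{lk,i}(\{\bm{\Theta}\})$ is genuinely affine in the block-diagonal, symmetric $\bm{\Theta}_m$ — the symmetry and (for $\mathcal{T}_I$) the unitarity constraints do not enter the minorization itself, only the feasible set, which is handled separately in the surrounding text. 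A secondary point to check is that no sign is lost when moving the interference terms from the $\gamma_{p,lk}$-style denominator (Lemma~\ref{lem-1}) to the $\bar r_{c,lk}$ denominator, which includes the $i=l$ common-message term; this is why \eqref{eq=lem=q-3} sums $\sum_i$ rather than $\sum_{i\neq l}$ in the private-on-common and common-on-common cross terms, exactly paralleling \eqref{eq=lem=q--}. Given these checks, the corollary follows by the same three-inequality chain as Lemma~\ref{lem-1}, so the proof reduces to the remark ``apply Lemma~\ref{lem-1} with the roles of channels and beamformers exchanged,'' plus the verification that $\mathbf{h}_{lk,i}(\{\bm{\Theta}\})$ is affine in $\{\bm{\Theta}\}$.
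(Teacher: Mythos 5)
Your proposal is correct and follows essentially the same route as the paper, which states the corollary without a separate proof precisely on the grounds you identify: the rates have the same bilinear structure in channels and beamformers, $\mathbf{h}_{lk,i}(\{\bm{\Theta}\})$ is affine in the RIS variables by \eqref{ch}, and so the two minorizations from the proof of Lemma~\ref{lem-1} (the quadratic lower bound on the Shannon term and the convex upper bound on the dispersion penalty) carry over verbatim with $\{\mathbf{x}\}$ frozen at $\{\mathbf{x}^{(t)}\}$. Your additional checks (concavity of the negated squared moduli of affine forms, and preservation of the MM conditions under affine reparametrization) are consistent with, and slightly more explicit than, what the paper records.
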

Let us call the concave lower bound for ${r}_{p,lk}$ as $\hat{r}_{p,lk}$. 
Substituting $\hat{r}_{p,lk}$s and $\hat{r}_{c,lk}$s in $f_i$s yield the surrogate functions $\hat{f}_i$s as well as the following problem
\begin{subequations}\label{opt-t-sur}
\begin{align}
 \underset{\{\bm{\Theta}\}\in\mathcal{T},\mathbf{r}_c
 }{\max}
& 
  \hat{f}_0\left(\left\{\mathbf{x}\right\},\{\bm{\Theta}\}\right) &\\
 \,\,\,\,\,\,\,\,\,\,\,\, \,\, \text{s.t.}   \,\,\,\,\,\,\,\,\,\,&  \hat{f}_i\left(\left\{\mathbf{x}\right\},\{\bm{\Theta}\}\right)\geq0,
\forall i,
\label{4-b-t-sur}
\\
\label{4-c-t-sur}
 &
\hat{r}_{lk}=\hat{r}_{p,lk}+r_{c,lk}\geq r_{lk}^{th},
\forall l,k,
\\\label{4-d-t-sur}
 &  \sum_{k=1}^K\!\!r_{c,lk}\!\!\leq\! \min_{k}\!\left\{\hat{r}_{c,lk}\right\}\!\!\triangleq\! \hat{r}_{c,l}(\{\bm{\Theta}\}),
\forall lk,
\\
\label{4-e-t-sur}
&
r_{c,lk}\geq 0,
\forall l,k,
 \end{align}
\end{subequations}
which is convex only for $\mathcal{T}_U$. 
{
The proposed framework converges to a stationary point of \eqref{ar-opt} when $\mathcal{T}$ is a convex set.} 
Note that $\mathcal{T}_U$ contains the convex constraint $\bm{\Theta}_{m_g}\bm{\Theta}^H_{m_g}\preceq {\bf I}$, which may not be suitable for implementing in some existing numerical solvers. In the following, we propose a suboptimal approach to rewrite the constraint $\bm{\Theta}_{m_g}\bm{\Theta}^H_{m_g}\preceq {\bf I}$ as a series of inequality constraints on scalar optimization parameters, which are referred to as disciplinary convex constraints that can be easily handled in numerical solvers. Moreover, we propose an approach to convexify $\mathcal{T}_I$ in Section \ref{iv-a}.
\subsubsection{Making $\bm{\Theta}_{m_g}\bm{\Theta}^H_{m_g}\preceq {\bf I}$ a disciplinary convex constraint} 
The constraint $\bm{\Theta}_{m_g}\bm{\Theta}^H_{m_g}\preceq {\bf I}$ can be equivalently expressed as ${\bf T}={\bf I}-\bm{\Theta}_{m_g}\bm{\Theta}^H_{m_g}\succeq {\bf 0}$.
The matrix ${\bf T}$ can be written as
\begin{align}
\nonumber
{\bf T}&={\bf I}-\bm{\Theta}_{m_g}\bm{\Theta}^H_{m_g}={\bf I}-\bm{\Theta}_{m_g}\bm{\Theta}^*_{m_g}
\\&
=
\left[\begin{array}{cc}
1-|\theta_{11}|^2-|\theta_{12}|^2&-\theta_{11}^*\theta_{12}-\theta_{12}^*\theta_{22}\\
-\left(\theta_{11}^*\theta_{12}+\theta_{12}^*\theta_{22}\right)^*&1-|\theta_{12}|^2-|\theta_{22}|^2
\end{array}\right]
.
\end{align}
Since ${\bf T}$ is a $2\times 2$ Hermitian matrix, ${\bf T}$ is PSD if and only if the following constraints hold
\begin{align}\label{1-2}
|\theta_{11}|^2+|\theta_{12}|^2&\leq 1,\\
\label{2-2}
|\theta_{12}|^2+|\theta_{22}|^2&\leq 1,\\
\nonumber
\zeta_{m_g}=\left(1-|\theta_{11}|^2-|\theta_{12}|^2\right)\left(1-|\theta_{12}|^2-|\theta_{22}|^2\right)
&\\
-|\theta_{11}^*\theta_{12}+\theta_{12}^*\theta_{22}|^2&\geq 0.
\label{3-2}
\end{align}
Note that since $\bm{\Theta}_{m_g}$ is symmetric, we have $\bm{\Theta}_{m_g}^H=\bm{\Theta}_{m_g}^*$.
The constraints \eqref{1-2} and \eqref{2-2} are convex. 
Moreover, the constraint \eqref{3-2} can be simplified to
\begin{multline}\label{10}
\zeta_{m_g}=
\underbrace{1-2|\theta_{12}|^2-|\theta_{11}|^2-|\theta_{22}|^2-|\theta_{12}|^2|\theta_{11}^*+\theta_{22}|^2}_{\text{\small{Concave Part}}}
\\
\underbrace{+|\theta_{12}|^4\!\!+\!|\theta_{11}|^2|\theta_{22}|^2\!\!+\!|\theta_{12}|^2|\theta_{22}|^2\!\!+\!|\theta_{11}|^2|\theta_{12}|^2}_{\text{\small{Convex Part}}}\geq 0,
\end{multline}
which is not a convex constraint since $\zeta_{m_g}$ is not a concave function. 
However, we can apply CCP to convexify this constraint. That is, we keep the concave part of $\zeta_{m_g}$ and find a linear lower bound for the convex part of $\zeta_{m_g}$ by the first order Taylor expansion as
\begin{multline}\label{11}
\zeta_{m_g}\geq\hat{\zeta}_{m_g}=
1-2|\theta_{12}|^2-|\theta_{11}|^2-|\theta_{22}|^2
\\
-|\theta_{12}|^2|\theta_{11}^*+\theta_{22}|^2
+4|\theta_{12}^{(t-1)}|^2\mathfrak{R}\left(\theta^{(t-1)}_{12}\theta_{12}^*\right)-3|\theta_{12}^{(t-1)}|^4
\\
+2|\theta_{22}^{(t-1)}|^2\mathfrak{R}\left(\theta^{(t-1)}_{11}\theta_{11}^*\right)
+2|\theta_{11}^{(t-1)}|^2\mathfrak{R}\left(\theta^{(t-1)}_{22}\theta_{22}^*\right)
\\
-3|\theta_{11}^{(t-1)}|^2|\theta_{22}^{(t-1)}|^2
+2|\theta_{22}^{(t-1)}|^2\mathfrak{R}\left(\theta^{(t-1)}_{12}\theta_{12}^*\right)
\\
+2|\theta_{12}^{(t-1)}|^2\mathfrak{R}\left(\theta^{(t-1)}_{22}\theta_{22}^*\right)
-3|\theta_{12}^{(t-1)}|^2|\theta_{22}^{(t-1)}|^2
\\
+2|\theta_{11}^{(t-1)}|^2\mathfrak{R}\left(\theta^{(t-1)}_{12}\theta_{12}^*\right)
+2|\theta_{12}^{(t-1)}|^2\mathfrak{R}\left(\theta^{(t-1)}_{11}\theta_{11}^*\right)
\\
-3|\theta_{11}^{(t-1)}|^2|\theta_{12}^{(t-1)}|^2
\geq 0.
\end{multline}
Unfortunately, even though \eqref{11} is a convex constraint, it is not still a disciplinary constraint because of the convex function $|\theta_{12}|^2|\theta_{11}^*+\theta_{22}|^2$.
To address this issue, we employ the first order Taylor expansion to approximate $-|\theta_{12}|^2|\theta_{11}^*+\theta_{22}|^2$ as a linear function of $\theta_{11}$, $\theta_{12}$ and $\theta_{22}$, which yields $\tilde{\zeta}_{m_g}$. 
The constraint $\tilde{\zeta}_{m_g}\geq 0$ is a disciplinary convex constraint, which can be easily implemented in existing numerical solvers. 
Finally, by inserting the corresponding constraints into \eqref{opt-t-sur}, we have the following disciplinary convex problem
\begin{subequations}\label{opt-t-sur2-bd}
\begin{align}
 \underset{\{\bm{\Theta}\},
\mathbf{r}_c
 }{\max}
& 
  \hat{f}_0\left(\left\{\mathbf{x}\right\},\{\bm{\Theta}\}\right) \\
  \text{s.t.}   \,&\,  
\eqref{4-b-t-sur}-\eqref{4-e-t-sur},
\\
&
\tilde{\zeta}>1-\epsilon, \eqref{1-2},\eqref{2-2},\,\,\forall m,g.
 \end{align}
\end{subequations}
Let us call the solution of \eqref{opt-t-sur2-bd} as $\bm{\Theta}_{m_g}^{(\star)}$.
Unfortunately,  fulfilling the constraints $\tilde{\zeta}_{m_g}\geq 0$, \eqref{1-2} and \eqref{2-2} does not guarantee $\bm{\Theta}_{m_g}^{(\star)}\bm{\Theta}_{m_g}^{(\star)^H}\preceq {\bf I}$. 
To ensure obtaining a feasible point, we first check if $\bm{\Theta}_{m_g}^{(\star)}\bm{\Theta}_{m_g}^{(\star)^H}\preceq {\bf I}$ holds. If the largest eigenvalue of $\bm{\Theta}_{m_g}^{(\star)}\bm{\Theta}_{m_g}^{(\star)^H}$ is greater than 1, i.e., $\lambda_{m_g}>1$, then we choose $\hat{\bm{\Theta}}_{m_g}=\frac{\bm{\Theta}_{m_g}^{(\star)}}{\sqrt{\lambda_{m_g}}}$.
Finally,  we update $\bm{\Theta}_{m_g}$ as
\begin{equation}\label{eq-42-star}
\{\!\bm{\Theta}^{(t)}\!\}\!=\!\!
\left\{\!\!\!\!\!
\begin{array}{lcl}
\{\hat{\bm{\Theta}}\}\!\!&\!\!\!\!\!\!\text{if}\!\!&
f_0\left(\{\hat{\bm{\Theta}}\}\!\right)
>
f_0\left(\{\bm{\Theta}^{(t-1)}\}\!\right)
\\
\{\bm{\Theta}^{(t-1)}\}&&\text{Otherwise}.
\end{array}
\right.
\end{equation}
{
This updating rule ensures generating a sequence of non-decreasing $f_0$, which guarantees the convergence, but we do not make any claim on the optimality of the algorithm in this case.}

\subsection{Convexifying $\mathcal{T}_I$} \label{iv-a}
The constraints $\bm{\Theta}_{m_g}\bm{\Theta}^H_{m_g}={\bf I}$ and $\bm{\Theta}_{m_g}=\bm{\Theta}^T_{m_g}$ can be simplified as
\begin{align*}
\nonumber
\bm{\Theta}_{m_g}\bm{\Theta}^H_{m_g}
&\!
=\!
\left[\begin{array}{cc}
\theta_{11}&\theta_{12}\\
\theta_{12}&\theta_{22}
\end{array}\right]
\!
\left[\begin{array}{cc}
\theta_{11}^*&\theta_{12}^*\\
\theta_{12}^*&\theta_{22}^*
\end{array}\right]
\\&
\nonumber
\!=
\!
\left[\begin{array}{cc}
|\theta_{11}|^2+|\theta_{12}|^2&\theta_{11}^*\theta_{12}+\theta_{12}^*\theta_{22}\\
\left(\theta_{11}^*\theta_{12}+\theta_{12}^*\theta_{22}\right)^*&|\theta_{12}|^2+|\theta_{22}|^2
\end{array}\right]
\\&
\!=\!
\left[\begin{array}{cc}
1&0\\
0&1
\end{array}\right]\!,
\end{align*}
which results in
\begin{align}\label{1}
|\theta_{11}|^2+|\theta_{12}|^2&=1,\\
\label{2}
|\theta_{12}|^2+|\theta_{22}|^2&=1,\\
\label{3}
\theta_{11}^*\theta_{12}+\theta_{12}^*\theta_{22}&=0.
\end{align}
The constraints \eqref{1}-\eqref{3} are equivalent to $|\theta_{11}|=|\theta_{22}|$, and
\begin{align}\label{eq39}
\theta_{11}^*\theta_{12}\!+\theta_{12}^*\theta_{22}\!=\!|\theta_{12}|e^{j\angle \theta_{12}}\left(\theta_{11}^*+\theta_{22}e^{-j2\angle \theta_{12}} \right)=0,
\end{align}
which yields $\theta_{22}=\theta_{11}^*e^{j(2\angle \theta_{12}+\pi)}$. 
{
As can be verified through \eqref{1}-\eqref{eq39}, the phases and amplitudes of the coefficients for a group connected BD-RIS are highly dependent. For instance, if an amplitude of one of the coefficients is known, the amplitude of the other two coefficients can be uniquely obtained. Additionally, the phases are related to each other according to $\angle \theta_{22}=-\angle \theta_{11}+2\angle \theta_{12}+\pi$. }
The constraints \eqref{1} and \eqref{2} are not convex. To convexify \eqref{1}, we rewrite it as the following two constraints:
\begin{align}\label{1n}
|\theta_{11}|^2+|\theta_{12}|^2&\leq1,
\\
|\theta_{11}|^2+|\theta_{12}|^2&\geq1.
\label{1nn}
\end{align}
The constraint \eqref{1n} is a convex constraint, but \eqref{1nn} is not since $|\theta_{11}|^2+|\theta_{12}|^2$ is a jointly convex function in $|\theta_{11}|^2$ and $|\theta_{12}|^2$. Thus, we can employ CCP to convexify $|\theta_{11}|^2+|\theta_{12}|^2\geq1$
and relax it for a faster convergence as
\begin{multline}
\label{44}
|\theta_{11}^{{(t-1)}}|^2+2\mathfrak{R}\left(\theta_{11}^{{(t-1)}}(\theta_{11}-\theta_{11}^{{(t-1)}})^*\right)
+\!
|\theta_{12}^{{(t-1)}}|^2\!
\\
+\!
2\mathfrak{R}\!\left(
\!
\theta_{12}^{{(t-1)}}(\theta_{12}-\theta_{12}^{{(t-1)}})^*
\!
\right)
\!\!
\geq 
\!
1-\epsilon,
\end{multline}
where $\epsilon>0$. Similarly, we can convexify \eqref{2} by considering the following constraints:
\begin{align}
|\theta_{22}|^2+|\theta_{12}|^2&\leq1
\\
\nonumber
|\theta_{22}^{{(t-1)}}|^2+2\mathfrak{R}\left(\theta_{22}^{{(t-1)}}(\theta_{22}-\theta_{22}^{{(t-1)}})^*\right)
\\
+\!
|\theta_{12}^{{(t-1)}}|^2\!
+\!
2\mathfrak{R}\!\left(
\!
\theta_{12}^{{(t-1)}}(\theta_{12}-\theta_{12}^{{(t-1)}})^*
\!
\right)
\!\!
&\geq 
\!
1-\epsilon,
\label{43-n}
\end{align}
 In addition to \eqref{1} and \eqref{2}, the constraint \eqref{3} (or $\theta_{22}=\theta_{11}^*e^{j(2\angle \theta_{12}+\pi)}$) is not convex neither. A suboptimal way to convexify this constraint is to fix the phase of $\theta_{12}$.
For instance, if $\theta_{12}$ is real (or pure imaginary), then $\theta_{22}=-\theta_{11}^*$ (or $\theta_{22}=\theta_{11}^*$). 
Note that when $\theta_{12}=0$, the BD-RIS with group-connected architecture of group size two is equivalent to the diagonal RIS, and the constraint \eqref{3} is automatically satisfied. Thus, in this case, $\theta_{22}$ is independent of $\theta_{11}$, and there is no need to consider $\theta_{22}=-\theta_{11}^*$ (or $\theta_{22}=\theta_{11}^*$).
As a result, this algorithm never performs worse than the diagonal RIS.
Finally, the corresponding optimization problem in this case 
is
\begin{subequations}\label{opt-t-sur2-bd2}
\begin{align}
 \underset{\{\bm{\Theta}\},
\mathbf{r}_c
 }{\max}\,\,
& 
  \hat{f}_0\left(\left\{\mathbf{x}\right\},\{\bm{\Theta}\}\right) \\
  \text{s.t.}\,\,   \,&\,  
\eqref{4-b-t-sur}-\eqref{4-e-t-sur},
\\
&
\eqref{1n},\eqref{44}-\eqref{43-n},\,\,\forall m,g,
\\
&
\theta_{22}=-\theta_{11}^*\,\,\, \text{if}\,\,\,\theta_{12}\neq0,\,\,\forall m,g,
 \end{align}
\end{subequations}
which is convex and can be efficiently solved. 
{
Let us call the solution of \eqref{opt-t-sur2-bd2} as $\bm{\Theta}_{m_g}^{(\star)}$. 
We first normalize $\bm{\Theta}_{m_g}^{(\star)}$ to satisfy \eqref{1} and \eqref{2}, which results in $\hat{\bm{\Theta}}_{m_g}$. Finally, we update $\bm{\Theta}_{m_g}$ based on the rule in \eqref{eq-42-star}, which ensures the convergence, but we do not make any claim on the optimality for our framework with $\mathcal{T}_I$.
}

{
\subsection{Computational complexity analysis}

In this subsection, we provide computational complexity analysis only for the MWRM problem with the feasibility set $\mathcal{T}_U$ due to a strict page limitation. However, it can be straightforward to extend the analysis to other optimization problems with feasibility set $\mathcal{T}_I$. Note that the actual computational complexity of our algorithms depends on their implementation. Here, we provide an approximation for the number of multiplications to obtain a solution for our schemes.    

Our proposed framework is iterative, and in each iteration, two convex problems should be solved to update the beamforming vectors, RSMA parameters and BD-RIS coefficients. Firstly, we have to solve \eqref{ar-opt-x-sur} to update $\{{\bf x}\}$. According to \cite[Chapter 11]{boyd2004convex}, the number of Newton iterations to solve a convex optimization problem is proportional to the square root of the number of constraints in the problem. Thus, the number of Newton iterations to solve  \eqref{ar-opt-x-sur} grows with the square root of the total number of users in the system, i.e., $\sqrt{LK}$. To solve each Newton iteration, we have to obtain $2LK$ concave lower bounds for the rates. Each concave lower bound has a quadratic structure, and the number of multiplications to compute each is proportional to $N_{BS}LK$. Thus, we can approximate the computational complexity for solving \eqref{ar-opt-x-sur} by $\mathcal{O}\left(N_{BS}L^2K^2\sqrt{KL}\right)$. 
To update $\{\bm{\Theta}\}$, we have to solve \eqref{opt-t-sur2-bd}, which has $3KL+1.5MN_{RIS}$ constraints. To solve each iteration, we have to compute $LK$ channels, which approximately requires $MN_{RIS}^2N_{BS}$ multiplications per channel. 
Furthermore, we have to compute $2LK$ rates, which approximately requires $2N_{BS}L^2K^2$ multiplications ($N_{BS}LK$ multiplications for each rate).
Hence, the overall computational complexity of solving \eqref{opt-t-sur2-bd} can be approximated as $\mathcal{O}\left(LKN_{BS}\left(MN_{RIS}^2+2LK\right)\sqrt{3KL+1.5MN_{RIS}}\right)$. 
Considering $N$ as the maximum number of iterations for our algorithm, the overall computational complexity can be approximated as $N$ times the summation of the computational complexities of updating $\{{\bf x}\}$ and $\{\bm{\Theta}\}$.}

\section{Numerical results}\label{secnum}
In this section, we provide some numerical results for the MWRM and MWEEM problems. We consider $\epsilon_c=\epsilon_p$ and $n_p=n_c.$ 
{
We assume that there is a line of sight (LoS) link between each BS and each (BD-)RIS as well as between each (BD-)RIS and each user, similar to e.g., \cite{pan2020multicell, soleymani2022improper}. Thus, we assume that small-scale fading for the channels ${\bf f}_{lk,m}$ and ${\bf G}_{mi}$ for all $l,k,m,i$ follow Rician fading as described in \cite[Eq. (55)]{pan2020multicell}. We consider a Rician factor of $3$ for these channels \cite{pan2020multicell}. Additionally, we assume that there is a non-LoS link between each BS and each user, which results in a Rayleigh small-scale fading for ${\bf d}_{lk,i}$ for all $l,k,i$ \cite{pan2020multicell, soleymani2022improper}. 
For large-scale fading, we employ the well-known path loss model, given in e.g., \cite[Eq. (59)]{soleymani2022improper}. 
The other propagation parameters, including the antenna gains, bandwidth, noise power density, path loss components, and the path loss at the reference distance of $1$ meter, are based on \cite{soleymani2022improper, soleymani2023spectral}.}
 Moreover, the simulation scenario is a two-cell MISO BC, as depicted in \cite[Fig. 5]{soleymani2023spectral}, where the BSs/RISs/users positions are chosen similar to \cite{ soleymani2023spectral}. Additionally, the considered schemes in the simulations are represented as:
{\bf RS-BD-RIS$_{U}$} (or {\bf RS-BD-RIS$_{I}$}): The proposed scheme for BD-RIS-assisted systems with rate splitting, BD-RIS with group-connected architecture of group size two and feasibility set $\mathcal{T}_U$ (or $\mathcal{T}_I$).
{\bf RS-RIS$_{X}$} (or {\bf TIN-RIS$_{X}$}): The proposed algorithm for RIS-assisted systems with rate splitting (or TIN), regular RIS and feasibility set $\mathcal{T}_X$, where $X$ can be $U$ or $I$.
{\bf RS-RIS$_R$} (or {\bf TIN-RIS$_R$}): The proposed algorithm for RIS-assisted systems with rate splitting (or TIN), regular RIS and random RIS elements.
{\bf RS-No-RIS} (or {\bf TIN-No-RIS}): The RSMA (or TIN) scheme for systems without RIS.
{\bf Sh-RS-BD-RIS$_U$}: The proposed scheme for BD-RIS-assisted systems with rate splitting, BD-RIS with group-connected architecture of group size two, feasibility set $\mathcal{T}_U$ and considering Shannon rates.

\subsection{Fairness rate}
In this subsection, we present some numerical results for the fairness rate, which is defined as the  maximum of the minimum rate of users. We consider the impact of the reliability constraint, packet length, number of users per cell and power budget on the performance of RSMA and RIS. Additionally, we compare the performance of regular RIS with BD-RIS with group connected architecture of group size two.

\subsubsection{Impact of the reliability constraint}
\begin{figure}
    \centering
    \begin{subfigure}{0.45\textwidth}
        \centering
\includegraphics[width=\textwidth]{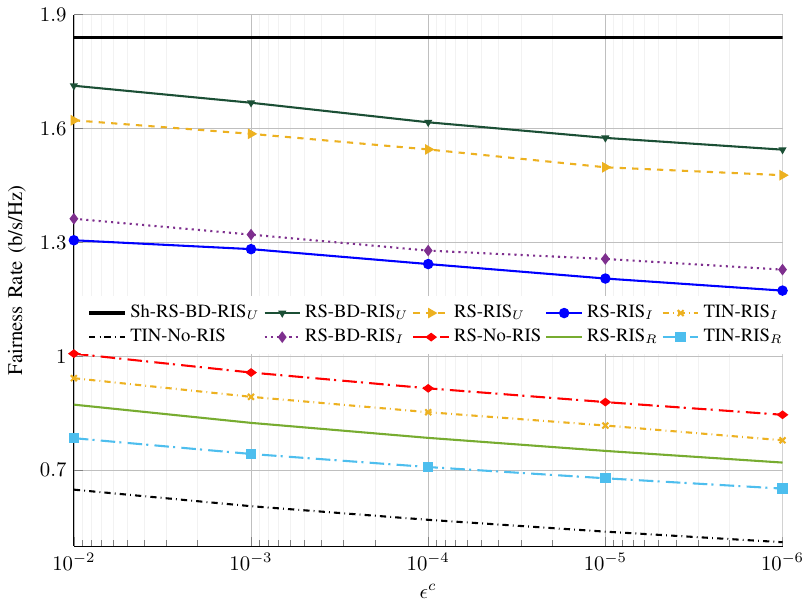}
        \caption{Average fairness rate.}
    \end{subfigure}
~
\begin{subfigure}{0.45\textwidth}
        \centering
\includegraphics[width=\textwidth]{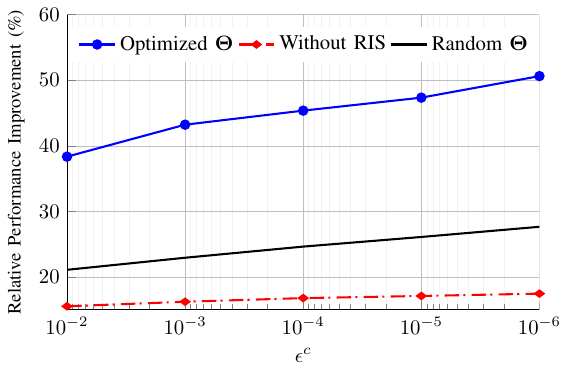}
        \caption{Average relative performance improvement by RSMA.}
    \end{subfigure}%
    \caption{The average fairness rate and performance improvement by RSMA versus $\epsilon^c$ for multi-cell MISO BC with $N_{RIS}=20$, $L=2$, $M=2$, $n_t=200$ bits,  $K=6$, $N_{BS}=5$ and $P=10$dB.}\vspace{-.6cm}
	\label{Fig-r-eps}  
\end{figure}
Fig. \ref{Fig-r-eps}  shows the average fairness rate and performance improvement by RSMA versus $\epsilon^c$ for multi-cell MISO  BC with $N_{RIS}=20$, $L=2$, $M=2$, $n_t=200$ bits,  $K=6$, $N_{BS}=5$ and $P=10$dB.
As expected, the fairness rate for all schemes decreases when the reliability constraint is more stringent. 
Moreover, we can observe that the RSMA scheme with BD-RIS with group-connected architecture of group size two and the feasibility set $\mathcal{T}_U$ can outperform the other schemes. Additionally, we observe that RIS (either regular or BD) can significantly improve the system performance and enhance the RSMA benefits. Note that in this example, the number of users per cell is higher than the number of BS antennas, which means that the considered system is overloaded. Interestingly, we can observe that RSMA without RIS can outperform TIN with regular RIS, which  shows the importance of employing an effective interference-management technique in overloaded systems.
Finally, we observe that the benefits of RSMA increase with ${\epsilon^c}^{-1}$. It shows that managing interference is even more important in URLLC systems. Furthermore, we observe that the RSMA benefits with employing RIS with properly optimized elements.

\subsubsection{Impact of the packet length}
\begin{figure}
    \centering
    \begin{subfigure}{0.45\textwidth}
        \centering
\includegraphics[width=\textwidth]{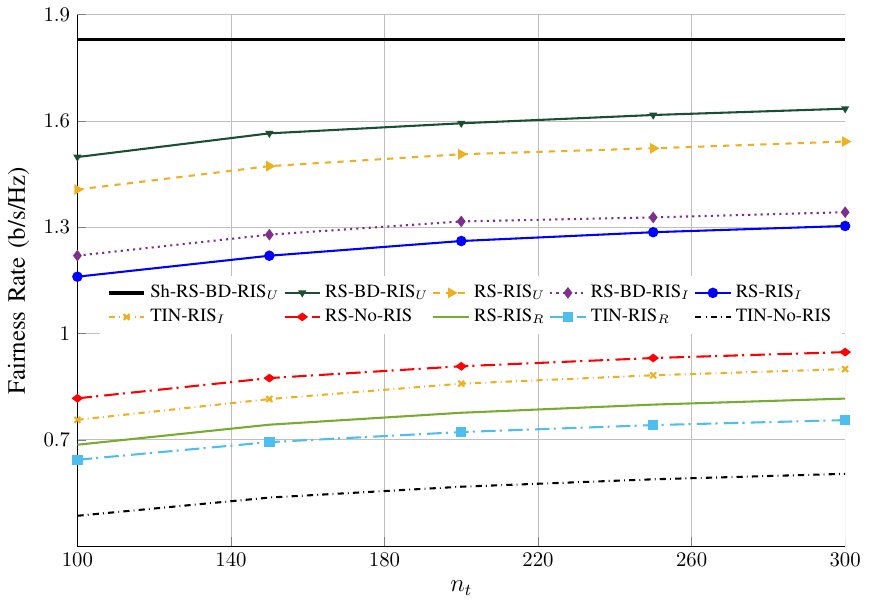}
        \caption{Average fairness rate.}
    \end{subfigure}
~
\begin{subfigure}{0.45\textwidth}
        \centering
\includegraphics[width=\textwidth]{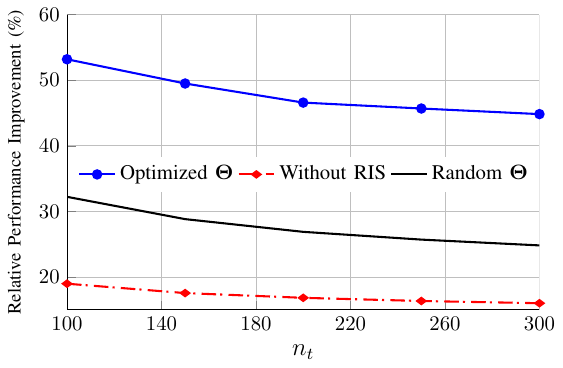}
        \caption{Average relative performance improvement by RSMA.}
    \end{subfigure}%
    \caption{The average fairness rate and performance improvement by RSMA versus $n_t$ for multi-cell MISO BC with $N_{RIS}=20$, $L=2$, $M=2$, $n_t=256$,  $K=6$, $\epsilon^c=0.001$ and $P=10$dB.}
	\label{Fig-r-nt}  
\end{figure}
Fig. \ref{Fig-r-nt}  shows the average fairness rate and performance improvement by RSMA versus $n_t$ for multi-cell MISO BC with $N_{RIS}=20$, $L=2$, $M=2$, $n_t=256$,  $K=6$, $\epsilon^c=0.001$ and $P=10$dB. As can be observed, the RSMA scheme with BD-RIS with group-connected architecture of group size two and the feasibility set $\mathcal{T}_U$ can outperform the other schemes. Moreover, we can observe that RIS and RSMA can significantly increase the average fairness rate. As expected, the average fairness rate increases with $n_t$ for all the schemes. However, the benefits of RSMA decrease with $n_t$. Indeed, when the packet lengths are shorter (or the latency constraint is more stringent), RSMA can provide higher gains, which indicates that RSMA can be even more effective in URLLC systems.

\subsubsection{Impact of the number of users per cell}
\begin{figure}
            \centering
\includegraphics[width=.42\textwidth]{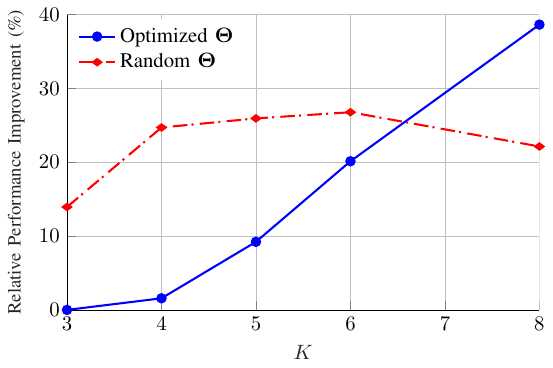}
            \caption{The average performance improvement by RSMA versus $K$ for multi-cell MISO RIS-assisted BC with $N_{RIS}=20$, $L=2$, $M=2$, $n_t=200$,  $\epsilon^c=0.001$ and $P=17$dB.}
	\label{Fig-r-ben}  
\end{figure}
Fig. \ref{Fig-r-ben} shows the impact of the number of users per cell as well as optimizing RIS components on the performance of RSMA. 
As can be observed, the benefits of RSMA can highly vary when optimizing RIS elements. Interestingly, we can observe that RIS may decrease RSMA benefits in underloaded systems, while it enhances the RSMA gain in overloaded systems.
This happens because  RIS can modulate the channels to mitigate the interference in underloaded systems. Thus,  a proper design of RIS elements may decrease the RSMA  benefits in underloaded systems. 
However,  as the number of users grows, the interference becomes more severe, and RIS alone cannot completely mitigate it. As a result, we observe that the RSMA benefits monotonically increase with $K$ when RIS elements are properly designed. 

\subsubsection{Impact of power budget}
\begin{figure}
    \centering
    \begin{subfigure}{0.45\textwidth}
        \centering
\includegraphics[width=\textwidth]{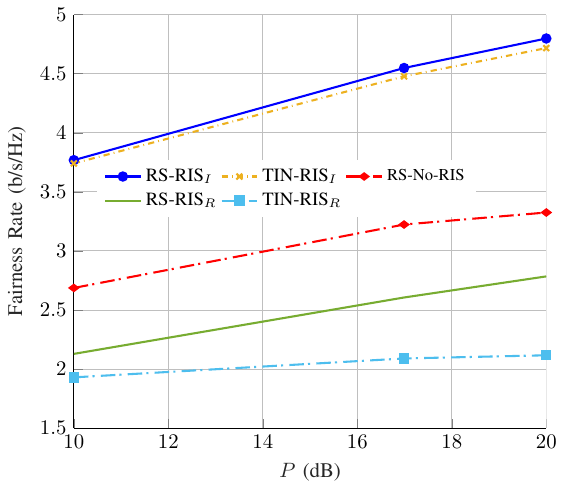}
        \caption{$K=4$.}
    \end{subfigure}
\begin{subfigure}{0.45\textwidth}
        \centering
\includegraphics[width=\textwidth]{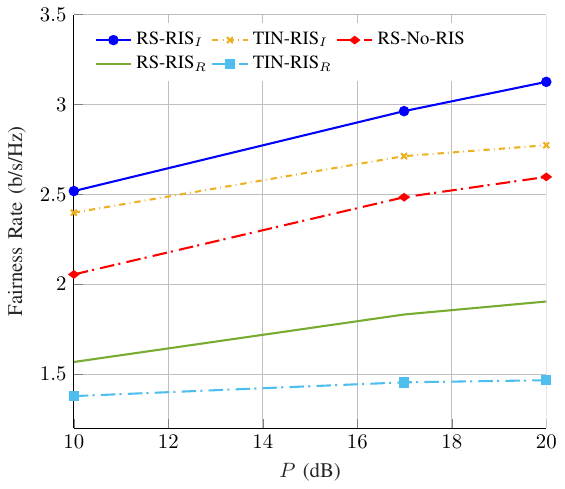}
        \caption{$K=5$.}
    \end{subfigure}%
    \caption{The average fairness rate versus the BSs power budget $P$ for $N_{BS}=8$, $N_{RIS}=20$, $L=2$, $M=2$, $n_t=200$,  $\epsilon^c=0.001$ and  $K$.}
	\label{Fig-r}  
\end{figure}
Fig. \ref{Fig-r} shows the average fairness rate versus the BSs power budget $P$ for $N_{BS}=8$, $N_{RIS}=20$, $L=2$, $M=2$, $n_t=200$,  $\epsilon^c=0.001$ and different $K$.
In this figure, the number of users per cell is less than the number of BS antennas, which is referred to as underloaded systems. As can be observed, RSMA cannot provide any considerable benefits in RIS-assisted systems for $K=4$. However, RSMA can significantly outperform TIN for $K=5$, especially in higher SNR regimes.  Indeed, it shows that RSMA can be still beneficial in underloaded systems. Additionally, we observe that RSMA can significantly outperform TIN with random RIS elements in underloaded scenarios. However, such benefits are lower (or may even vanish) when RIS elements are optimized.  
This result indicates that RIS can be employed as an interference-management technique in underloaded BCs. However, we should employ more advanced interference-management techniques to fully reap RIS benefits for overloaded systems, which is in line with the findings in \cite{soleymani2022rate, soleymani2022noma}.

\subsubsection{Comparison of RIS technologies}
In Fig. \ref{Fig-r-nt} and Fig. \ref{Fig-r-eps}, we compare the performance of various schemes, including regular and BD-RIS with different feasibility sets. In this part, we provide another comparison for regular RIS and BD-RIS with group-connected architecture of groups size two. 
\begin{figure}
    \centering
\includegraphics[width=.45\textwidth]{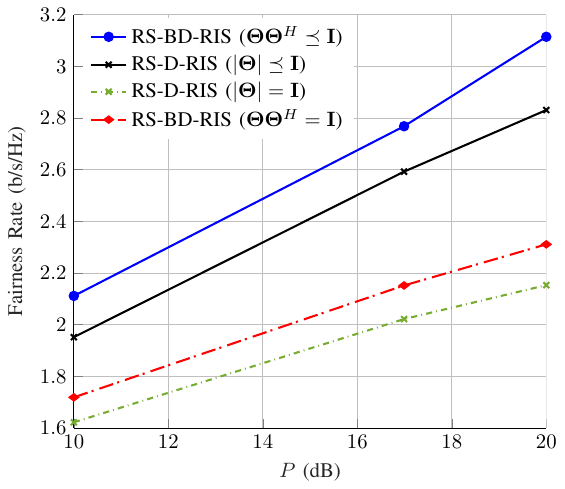}
            \caption{The average fairness rate versus power budget for $N_{RIS}=20$, $L=2$, $M=2$, $n_t=200$, $\epsilon^c=0.001$, and $N_{BS}=5$.}
	\label{Fig-bd}  
\end{figure}
Fig. \ref{Fig-bd} shows the average fairness rate versus the BSs power budget $P$ for $N_{BS}=5$, $N_{RIS}=20$, $L=2$, $M=2$, $n_t=200$, and $\epsilon^c=0.001$.
As can be observed, the BD-RIS with group-architecture of group size two and constraint $\bm{\Theta}\bm{\Theta}^H\preceq {\bf I}$ can highly outperform a regular diagonal RIS. 
As indicated in Section \ref{seciv}, BD-RIS is a more general architecture than regular RIS that includes regular RIS as a special case. Thus, BD-RIS never performs worse than regular RIS.

\subsection{Fairness EE}
In this subsection, we present some numerical results for the minimum EE of users, which is referred to as the fairness EE. 
\begin{figure}
    \centering
\includegraphics[width=.48\textwidth]{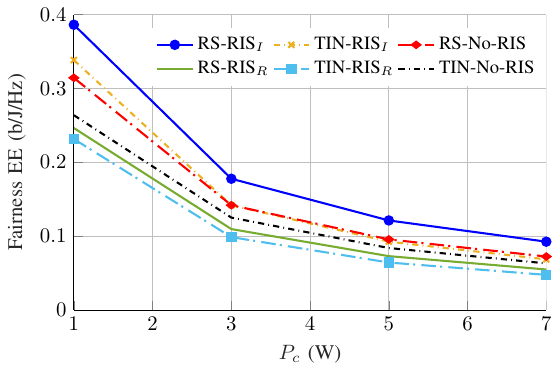}
    \caption{The average fairness EE versus $P_c$ for $N_{RIS}=20$, $L=2$, $M=2$, $n_t=200$, $\epsilon^c=0.001$, $N_{BS}=5$ 
and $K=7$.}
	\label{Fig-ee2}  
\end{figure}
In Fig. \ref{Fig-ee2}, we show the average fairness EE versus $P_c$ for $N_{RIS}=20$, $L=2$, $M=2$, $n_p=n_c=200$, $\epsilon_c=\epsilon_p=0.001$, $N_{BS}=5$ and  $K=7$. As can be observed, RIS can highly increase the fairness EE for both RSMA and TIN schemes when RIS components are properly optimized. We can also observe that RSMA can outperform TIN with and without RIS. In the following, we provide an in-depth discussion of the benefits of RIS and RSMA from an EE point of view.

\begin{figure}
    \centering
\includegraphics[width=.48\textwidth]{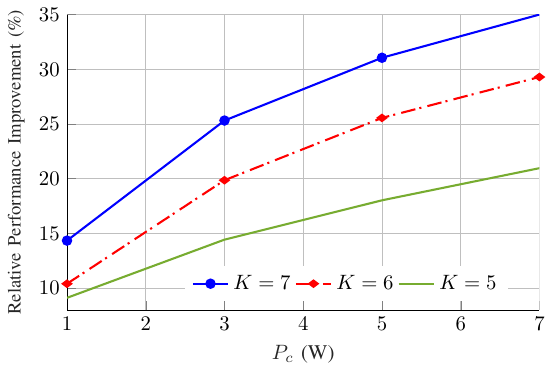}
            \caption{The average performance improvement by RSMA versus $P_c$ for $N_{RIS}=20$, $L=2$, $M=2$, $n_t=200$, $\epsilon^c=0.001$, and $N_{BS}=5$.}
	\label{Fig-ee}  
\end{figure}
In Fig. \ref{Fig-ee}, we show the average fairness EE improvement by RSMA for RIS-assisted systems. As can be observed, the benefits of employing RSMA increase with $K$. The reason is that, as the number of users grows, the interference level increases, which in turn improves the benefits of employing a powerful interference-management technique such as RS. Indeed, the more overloaded the systems is, the more gain RSMA can provide. We also observe that the RSMA benefits increase with $P_c$. Indeed, when the constant power consumption is higher, it is more important to properly design the system to get a better EE performance.

\begin{figure}[t!]
    \centering
\includegraphics[width=.48\textwidth]{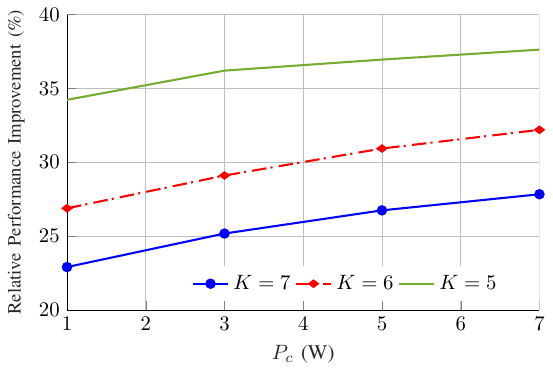}
            \caption{The average performance improvement by RIS for the RSMA schemes versus $P_c$ for $N_{RIS}=20$, $L=2$, $M=2$, $n_t=200$, $\epsilon^c=0.001$, and $N_{BS}=5$.}
	\label{Fig-ee3}  
\end{figure}
Fig. \ref{Fig-ee3} shows the average performance improvement by RIS for the RSMA schemes versus $P_c$ for $N_{RIS}=20$, $L=2$, $M=2$, $n_t=200$, $\epsilon^c=0.001$, and $N_{BS}=5$. As can be observed, the RIS benefits decrease with the number users for a fixed $N_{RIS}$. However, the RIS benefits are still significant even when the number of RIS elements per user is relatively low (less than 3 per user). This suggests that RIS may be promising in practical scenarios.  
We also observe that the RIS benefits increase with $P_c$, which is in line with the results in Fig. \ref{Fig-ee}.

\subsection{Summary}
 Our main findings in the numerical section can be summarized as follows:
 \begin{itemize}
 \item RSMA and RIS can significantly improve the SE and EE of the system. Moreover, the combination of the RSMA and BD-RIS with group-connected architecture of group size two outperforms the other schemes.

\item The use of RIS increases the benefits of RSMA in overloaded systems. However, the relative benefits of RSMA decrease by employing RIS in underloaded systems.

\item The benefits of RSMA increase with the number of users, $K$, since the interference level increases when there are more users in the system. However, the benefits of RIS decrease with $K$ when $N_{RIS}$ is fixed. The reason is that the number of RIS elements per user decreases with $K$ for a fixed $N_{RIS}$, which in turn reduces the RIS benefits.

\item RSMA provides higher gains when the packet lengths are shorter or when the reliability constraint is more stringent. Therefore, RSMA is more effective in URLLC systems, especially in highly overloaded systems.
 \end{itemize}

\section{{
Conclusion  and future work}}\label{seccon}
In this paper, we showed that RSMA and RIS can significantly improve the spectral and energy efficiency of MISO multi-cell BC URLLC systems. 
Moreover, we investigated the role of RSMA and RIS in URLLC systems, analyzing the impact on performance of different parameters, such as the reliability constraint, packet length, number of users per cell and BS power budget. We showed that the use of RIS has a different impact on the benefits of RSMA in different operational points. Specifically, RIS decreases RSMA benefits in underloaded systems, while it enhances the RSMA gain in overloaded systems. Indeed, RSMA and RIS can be mutually beneficial tools in overloaded systems. 
In addition, RSMA provides higher gains when the reliability constraint is more stringent and/or when the packet lengths are shorter. Finally, we showed that BD-RIS with group-connected architecture of group size two can outperform regular RIS. 

{
As a future work, it can be interesting to investigate the performance of multi-layer RSMA schemes in URLLC systems. Moreover, the performance gap between our proposed algorithms and the global optimal solution should be studied. 
Additionally, proposing robust designs for RSMA in multiple-antenna BD-RIS-assisted URLLC systems can be another challenging direction for future studies. In \cite{kim2022learning, peng2021robust, yang2021performance, hu2021robust}, the performance RIS has been studied in the presence of imperfect CSI, which might be helpful to obtain robust RSMA designs for RIS-assisted URLLC systems. Finally, there is no implementation of BD-RIS yet, and it is important to investigate the performance of BD-RIS with more realistic models based on practical implementations.}
\appendices
\section{Proof of Lemma \ref{lem-1}}\label{ap-1}\vspace{-.2cm}
Due to a space restriction, we only provide a proof for the inequality in \eqref{eq=lem=q--}. It is straightforward to extend the proof to obtain the lower bound $\tilde{r}_{p,lk}$ in \eqref{eq=lem=q}. To prove \eqref{eq=lem=q--}, we obtain a concave lower bound for the Shannon rate as well as a convex upper bound for $\delta_{c,lk}(\{\mathbf{x}\},\{\bm{\Theta}\})$. 
Employing \cite[Eq. (70)]{soleymani2023spectral}, we can obtain a lower bound for the Shannon rate as
\begin{multline}
\log\left(1+\gamma_{c,lk}\right)\geq
\log\left(1+\gamma_{c,lk}^{(t-1)}\right)
-
\gamma_{c,lk}^{(t-1)}
\\
+
2
d_{lk}\mathfrak{R}
\left\{
\left(
\mathbf{h}_{lk,l}
\mathbf{x}_{c,l}^{(t-1)}
\right)^*
\mathbf{h}_{lk,l}
\mathbf{x}_{c,l}
\right\}
\\
-
\gamma_{c,lk}^{(t)}e_{lk}
\left(
\sigma^2+
\sum_{ij}\left|\mathbf{h}_{lk,i}
\mathbf{x}_{p,ij}\right|^2
+
\sum_{i}\left|\mathbf{h}_{lk,i}
\mathbf{x}_{c,i}\right|^2
\right),
\end{multline}
which is quadratic and concave in $\{{\bf x}\}$. 
To obtain a convex upper bound for $\delta_{c,lk}(\{\mathbf{x}\},\{\bm{\Theta}\})$, we employ the inequality in \cite[Eq. (70)]{soleymani2023spectral}, which results in
\begin{multline}\label{pr-24}
\sqrt{V_{c,lk}}\leq 
\frac{\sqrt{V_{c,lk}^{(t)}}}{2}
+
\frac{\gamma_{c,lk}}{\sqrt{V_{c,lk}^{(t)}}\left(1+\gamma_{c,lk}\right)}
=\frac{\sqrt{V_{c,lk}^{(t)}}}{2}
\\+\frac{1}{\sqrt{V_{c,lk}^{(t)}}}\left(1
-
\frac
{\sigma^2+
\sum_{ij}\left|\mathbf{h}_{lk,i}
\mathbf{x}_{p,ij}\right|^2
+
\sum_{i\neq l}\left|\mathbf{h}_{lk,i}
\mathbf{x}_{c,i}\right|^2
}{
\sigma^2+
\sum_{ij}\left|\mathbf{h}_{lk,i}
\mathbf{x}_{p,ij}\right|^2
+
\sum_{i}\left|\mathbf{h}_{lk,i}
\mathbf{x}_{c,i}\right|^2
}\right).
\end{multline}
Unfortunately, the upper bound in the right-hand side of \eqref{pr-24} is not convex in $\{{\bf x}\}$.
To obtain a convex upper bound for the right-hand side of \eqref{pr-24}, we employ \cite[Eq. (69)]{soleymani2023spectral}, which yields 
\begin{multline}\label{pr-25}
\sqrt{V_{c,lk}}\leq \frac{\sqrt{V_{c,lk}^{(t)}}}{2}
+\frac{1}{\sqrt{V_{c,lk}^{(t)}}}\left(
1-2\sigma^2e_{lk}
\right.
\\
-2\sum_{ij}e_{lk}\mathfrak{R}\!\left\{\!\left(\mathbf{h}_{lk,i}
\mathbf{x}_{p,ij}^{(t-1)}\right)^*\!
\mathbf{h}_{lk,i}
\mathbf{x}_{p,ij}\!
\right\}
\\
-
2\sum_{i\neq l}e_{lk}\mathfrak{R}\!\left\{\!\left(\mathbf{h}_{lk,i}
\mathbf{x}_{c,i}^{(t-1)}\right)^*\!
\mathbf{h}_{lk,i}
\mathbf{x}_{c,i}\!
\right\}
\\
\left.+e_{lk}\zeta_{c,lk}^{(t-1)}
\left(
\sum_{ij}\left|\mathbf{h}_{lk,i}
\mathbf{x}_{p,ij}\right|^2
+
\sum_{i}\left|\mathbf{h}_{lk,i}
\mathbf{x}_{c,i}\right|^2
\right)
\right).
\end{multline}
Finally, we can obtain the lower bound in \eqref{eq=lem=q--} by substituting \eqref{pr-24} and \eqref{pr-25} in \eqref{rate}. 

\bibliographystyle{IEEEtran}
\bibliography{ref2}

\end{document}